%% file: main.tex
\documentclass[11pt]{article}
\input{defs}

\usepackage{fullpage}

\title{A Near-Optimal Offline Algorithm for Dynamic All-Pairs Shortest Paths in Planar Digraphs} 

\author{Debarati Das\thanks{University of Copenhagen, Denmark, das@di.ku.dk.} \and Maximilian Probst Gutenberg\thanks{ETH Zurich, Switzerland, maximilian.probst@outlook.com. Supported by Rasmus Kyng's Start-up Grant at ETH. Work done while at BARC, supported by Thorup's Investigator Grant No. 16582.} \and Christian Wulff-Nilsen\thanks{University of Copenhagen, Denmark, koolooz@di.ku.dk. The author is supported by the Starting Grant 7027-00050B from the Independent Research Fund Denmark under the Sapere Aude research career programme.}}

\newcommand{\DDG}[1]{\ensuremath{\mbox{DDG}({#1})}}

\newtheorem{research question}{Research Question}

\def\poly{\operatorname{poly}}

\begin{document}
\date{}
\maketitle
\pagenumbering{gobble}

\begin{abstract}
In the planar, dynamic All-Pairs Shortest Paths (APSP) problem, a planar, weighted digraph $G$ undergoes a sequence of edge weight updates and the goal is to maintain a data structure on $G$, that can quickly answer distance queries between any two vertices $x,y \in V(G)$. 

The currently best algorithms [\hyperlink{cite.fakcharoenphol2001planar}{FOCS'01}, \hyperlink{cite.klein2005multiple}{SODA'05}] for this problem require $\tilde{O}(n^{2/3})$ worst-case update and query time, while conditional lower bounds [\hyperlink{cite.abboud2016popular}{FOCS'16}] show that either update or query time 
$\tilde{\Omega}(\sqrt{n})$
is needed\footnote{We use $\tilde{O}$ and $\tilde{\Omega}$-notations to hide poly-logarithmic factors in $n$. 
}.

In this article, we present the first algorithm with near-optimal $\tilde{O}(\sqrt{n})$ worst-case update and query time for the offline setting, where the update sequence is given initially. This result is obtained by giving the first offline dynamic algorithm for maintaining dense distance graphs (DDGs) faster than recomputing from scratch after each update.

Further, we also present an \emph{online} algorithm for the incremental APSP problem with $\tilde{O}(\sqrt{n})$ worst-case update/ query time. This allows us to reduce the online dynamic APSP problem to the online decremental APSP problem, which constitutes partial progress even for the online version of this notorious problem.
\end{abstract}

\newpage
\pagenumbering{arabic}

\section{Introduction}

In the planar, dynamic All-Pairs Shortest Paths (APSP) problem, one is given a planar, directed and weighted graph $G = (V,E,w)$ undergoing edge updates, i.e. edge insertions and deletions and the goal is to maintain a data structure that processes edge updates and can return for any pair of vertices $x,y \in V$, the distance $d_G(x,y)$ from $x$ to $y$ in the current version of $G$. In this work we use a somewhat restricted setting where the embedding is preserved throughout. Thus the allowed updates are edge weight increases and decreases.

This problem is maybe the most notorious open problem in planar dynamic graph algorithms: while a data structure exists \cite{fakcharoenphol2001planar, klein2005multiple} that achieves worst-case update and query time $\Tilde{O}(n^{2/3})$ and a conditional lower bound \cite{abboud2016popular} indicates that 
$\tilde{\Omega}(\sqrt{n})$
update and query time is essentially necessary, progress to close this gap has proven elusive. 

\begin{research question}
Can we close the gap between $\tilde{O}(n^{2/3})$ and 
$\tilde{\Omega}(\sqrt{n})$
update and query time for planar, dynamic APSP?
\end{research question}

Currently all approaches to achieve fast update and query times boil down to rather straight-forward extensions of static algorithms combined with the simple fact that there exists an $r$-division for any planar graph. Our approach involves a recursive decomposition into $r$-divisions.

\begin{definition}[r-division]\label{def:RDiv} Given planar embedded graph $G$ and $r \geq 1$, an $r$-division is a set of $O(n/r)$ edge-induced subgraphs $G_1 = G[E_1], G_2 = G[E_2], \dots, G_k = G[E_k]$, called \emph{pieces}, each of vertex size at most $r$ where $E_1,E_2, \dots, E_k$ partition the edge set $E$, such that each piece $G_i$ shares at most $O(\sqrt{r})$ vertices $\partial G_i$ with any of the other pieces. Further all vertices $\partial G_i$ are on a constant number of holes of $G_i$ (where a \emph{hole} of $G_i$ is a face in the induced graph $G_i$ that is not a face of $G$).
\end{definition}
For convenience, we will assume the vertices of the $O(1)$ holes of a piece $P$ are exactly the boundary vertices of $P$. This can always be ensured by adding, for each hole $h$, infinite-weight dummy edges between boundary vertices of $h$ that are consecutive in the cyclic ordering of the vertices of $h$.
\begin{definition}[Capped Recursive decomposition]\label{def:cappedRDecomp}
Given graph $G$ and $r \geq 1$, an \emph{$r$-capped recursive decomposition} of $G$ is a forest $\mathcal{R}$ with $\lfloor \lg r \rfloor$ levels where each node is associated with a subgraph of $G$. In particular, each child node is associated with a subgraph of the graph associated with its parent node. Further, the nodes in $\mathcal{R}$ that are at distance $i$ from the root, form an $r / 2^i$-division of $G$ (and an $r / 2^i$-division of their parent's graph) and leaf nodes consist of a single edge only. We often do not distinguish between a node and its associated subgraph of $G$.
\end{definition}

It is well-known that such an $r$-capped recursive decomposition can be found for any $r$ and planar $G$ in linear time \cite{klein2013structured}. The approach of \cite{fakcharoenphol2001planar, klein2005multiple} is to initially construct an $n^{2/3}$-capped recursive decomposition $\mathcal{R}$, where for each subgraph $H$ in $\mathcal{R}$, it computes the distances $\partial H \times \partial H$ which are stored as a complete graph over $\partial H$ (so-called \emph{dense distance graph (DDG)} of $H$), and finally exploit a clever modification of Dijkstra's algorithm (see the Theorem below) to implement queries in $\tilde{O}(n^{2/3})$ time by leveraging the DDGs.

\begin{theorem}[FR-Dijkstra, see \cite{fakcharoenphol2001planar,kaplan2012submatrix}]\label{thm:FRDijkstra}
Given a graph $G$, and $r$-capped recursive decomposition $\mathcal{R}$ and dense distance graphs for every graph in  $\mathcal{R}$. Then, there exists an algorithm that for any $x,y \in V$, computes the distance $d_G(x,y)$ in time $\tilde{O}(n/\sqrt{r})$. 
\end{theorem}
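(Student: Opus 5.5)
Given the $r$-capped recursive decomposition $\mathcal{R}$ with DDGs for every node, I would compute $d_G(x,y)$ by building a small auxiliary graph $A_{x,y}$ that is a union of dense distance graphs, and then running the FR-Dijkstra subroutine of \cite{fakcharoenphol2001planar,kaplan2012submatrix} on it. By Definition~\ref{def:cappedRDecomp}, the top level of $\mathcal{R}$ is an $r$-division of $G$. Let $P_x$ and $P_y$ be top-level pieces containing $x$ and $y$. The global graph on which distances are computed has two parts: the DDGs of all top-level pieces other than $P_x,P_y$, and a recursive replacement of $P_x$ and $P_y$. For the replacement of $P_x$, I walk down the root-to-leaf path of $\mathcal{R}$ towards an edge incident to $x$. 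At each level, I take the DDGs of all siblings of the current path node, and at the leaf I take the edge itself. Every $x$-to-$y$ path in $G$ decomposes into maximal subpaths inside individual pieces of this cover, and these subpaths meet only at boundary vertices. Since each DDG edge is a true distance within its piece, every shortest path in $A_{x,y}$ has length $d_G(x,y)$, and conversely $A_{x,y}$ never underestimates. So correctness reduces to showing that the chosen pieces partition $E$, which follows from the nested division structure.

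\textbf{Size accounting.} The top level has $O(n/r)$ pieces, each with $O(\sqrt r)$ boundary vertices, giving $O(n/\sqrt r)$ boundary vertices in total. Along the descent path, level $i$ contributes $O(1)$ siblings on average. The pieces at level $i$ form an $r/2^i$-division of their parent, so the parent splits into $O(1)$ children with $O(\sqrt{r/2^i})$ boundary vertices each. Summing this geometric series gives $O(\sqrt r)$ extra vertices. Thus $A_{x,y}$ has $O(n/\sqrt r)$ vertices, but up to $\sum_P |\partial P|^2 = O(n)$ edges, which is too many for plain Dijkstra.

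\textbf{Main step: FR-Dijkstra.} This is the step I expect to be hardest. It must run Dijkstra on a union of DDGs in time $\tilde O(\sum_P |\partial P|)$ rather than $\sum_P|\partial P|^2$. I would follow Fakcharoenphol--Rao.
\begin{itemize}
\item Because all boundary vertices of a piece lie on $O(1)$ holes, the distance matrix between boundary vertices on a single hole, ordered cyclically, is (after splitting into two halves) Monge. Pairs of distinct holes are handled by $O(1)$ further Monge submatrices, using the standard hole-cutting argument.
\item For each Monge matrix, I maintain a data structure that, as Dijkstra settles row vertices, reports the unsettled column vertex minimizing $d(u)+M[u,v]$ over settled rows $u$. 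For this I use the recursive range-minimum structure of \cite{fakcharoenphol2001planar} (or the SMAWK/binary-search variant of \cite{kaplan2012submatrix}). It supports activating a row and extracting the current column minimum in $O(\log^2 |\partial P|)$ amortized time, relying on non-crossing of shortest paths, i.e.\ the Monge property.
\item Each structure feeds its current minimum into a global heap. The total work is therefore $\tilde O(\sum_P |\partial P|) = \tilde O(n/\sqrt r)$.
\end{itemize}
The delicate point is proving that lazy extraction is correct despite distances changing as rows activate; the argument is that Monge structure makes the column-wise argmins monotone.
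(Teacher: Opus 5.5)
Your proposal is correct and follows the route the paper relies on: the paper does not reprove this theorem but cites Fakcharoenphol--Rao and Kaplan et al., whose query algorithm is your cover of $G$ by the top-level DDGs plus the sibling DDGs along the root-to-leaf paths towards $x$ and $y$, followed by Monge-based Dijkstra in $\tilde O(\sum_P|\partial P|)=\tilde O(n/\sqrt r)$ time. Two steps you gloss need tightening. First, one halving of a hole's cyclic order does not make the whole single-hole matrix Monge; FR recurse, which gives $O(\log|\partial P|)$ levels of bipartite Monge blocks. Second, for distinct holes no fixed linear order of $h_2$ makes the block Monge, because the correct cut point depends on the row. This is exactly the detail the paper says FR omit, and Appendix~\ref{sec:FRDijkstraHoles} resolves it without any hole-cutting: it splits the column range of each newly activated row $u$ at the endpoint of $u$'s rightmost DDG edge (precomputed from cyclic orderings of DDG edges) and argues contiguity as in Lemma~\ref{lma:consecVerticesS}.
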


Both \cite{fakcharoenphol2001planar, klein2005multiple} give a (static) algorithm to compute for a given graph $H$ that is associated with a root in $\mathcal{R}$, the DDG for every $H'$ in $\mathcal{R}$ with $H' \subseteq H$ in total time $\tilde{O}(r)$; since there are $O(n/r)$ pieces $H$, the total preprocessing time is thus $\tilde{O}(n)$. Further, after an edge weight change to $G$, the algorithm only has to locate the affected root graph $H$ containing the edge, and one can simply recompute the dense distances graphs in the tree rooted at $H$ from scratch, again in $\tilde{O}(r)$ time.

Thus, previous APSP algorithms maintained the DDGs of the pieces essentially by applying a static algorithm at every update step and obtained a dynamic APSP algorithm rather using the $r$-divisions technique in planar graphs. We ask whether there is a way of maintaining DDGs that does not require recomputation from scratch but rather maintains them dynamically. More concretely, we pose the following question.

\begin{research question}
Can a DDG be maintained dynamically with sublinear update time (and small query time)?
\end{research question}

We note that so far, the question about whether there exists an efficient encoding of DDG that can be maintained during updates with only a small number of bits changed over each update is not known either. In particular, for any DDG over an $n$-vertex graph $G$, there is no known encoding that does not have to change $\Omega(n)$ bits on average for each update.

\section{Contributions}

In this article, we present the first partial progress on this long-standing problem. We give the first algorithm that achieves the desired update/query time in the offline setting\footnote{Here, we only assume that the update seqeunce is offline, i.e. is initially presented the data structure, while queries can be online.}. Moreover, if the graph is \emph{incremental}, i.e. the update sequence only consists of edge weight decreases, the data structure works in the online setting. Due to the lower bound of Abboud and Dahlgaard \cite{abboud2016popular} that extends for both  offline and online incremental setting, our bounds are tight up to polylogarithmic factor.

\begin{theorem}\label{thm:mainAPSP}[Near-Optimal Dynamic APSP]
There exists a data structure for the planar, dynamic APSP problem in the offline setting, i.e. where the update sequence is accessible from the beginning of the algorithm, that has worst-case update time and query time $\tilde{O}(\sqrt{n})$ and preprocessing time $\tilde{O}(n)$. The data structure is deterministic.

If the graph is incremental then our data structure works in the online setting.
\end{theorem}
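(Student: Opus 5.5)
The plan is to combine FR-Dijkstra (Theorem~\ref{thm:FRDijkstra}) with an $r$-capped recursive decomposition $\mathcal{R}$ for $r = n$. With $r=n$, FR-Dijkstra answers a query in $\tilde{O}(n/\sqrt{r}) = \tilde{O}(\sqrt{n})$ time, provided the DDGs of all pieces are current. The whole difficulty therefore moves to maintaining these DDGs. Recomputing them from scratch costs $\tilde{O}(n)$ per update, since the affected root piece has size $n$. We must instead update the DDG of every piece containing the changed edge in $\tilde{O}(\sqrt{n})$ worst-case time. A piece $H$ of size $r_H$ has $O(r_H)$ DDG entries, so rewriting it explicitly is already too slow near the root. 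Hence we store each DDG implicitly, in a form that FR-Dijkstra can query. FR-Dijkstra only needs, for each hole, Monge-structured row-minimum access to $\partial H \times \partial H$ distances, i.e.\ $\tilde{O}(1)$-time entry queries.

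The key steps for the offline version are as follows. First, split the update sequence into phases of length $\sqrt{n}$. At the start of each phase, the changes of that phase are known. Second, for each piece $H$, let $X_H$ be the set of endpoints of edges in $H$ that are updated during the phase. Build a DDG of $H$ with respect to $\partial H \cup X_H$, treating these endpoints as extra boundary vertices. With the changed edges removed, all other edges keep their weights throughout the phase. Third, each query during the phase then becomes a shortest path computation on the union of these static DDGs plus the $O(\sqrt{n})$ current changed edges. FR-Dijkstra handles this in $\tilde{O}(\#\text{boundary vertices})$ time. Since $|X_H| \le \sqrt{n}$ in total across each level, the query time stays $\tilde{O}(\sqrt{n})$, as long as the $X_H$ vertices can be grouped on $\tilde{O}(1)$ holes per piece. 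To ensure this, we choose the decomposition so that the updated edges of a phase are isolated. Concretely, we recompute an $r$-division of $G$ in which the $\sqrt{n}$ updated edges become their own pieces. This is done via a recursive decomposition that treats those edges as boundary; balanced separator bounds still give $O(\sqrt{n}\log n)$ boundary vertices in total.

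Fourth, the preprocessing for the next phase, costing $\tilde{O}(n)$, is spread over the $\sqrt{n}$ updates of the current phase. This gives $\tilde{O}(\sqrt{n})$ worst-case time per update by standard double buffering, which is where offline knowledge is used.

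For the incremental online case, the same phase structure works, with one change. Because weights only decrease, the old DDG distances remain valid upper bounds. The true distances are then obtained by allowing paths through the newly decreased edges, whose endpoints become extra "portal" vertices. So the next phase's decomposition need not anticipate the future updates: they are simply added as portals, and the new structure is rebuilt in the background from the current graph.

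I expect the main obstacle to be the second step. Adding the arbitrary vertex sets $X_H$ as boundary vertices must keep them on $O(1)$ holes per piece and keep the total boundary at $\tilde{O}(\sqrt{n})$, so that the FR-Dijkstra Monge machinery still applies. My first attempt is to build a fresh recursive decomposition per phase that cuts out the updated edges as singleton leaf pieces, and then bound boundary sizes by the standard $r$-division argument with $\sqrt{n}$ weighted "special" edges.
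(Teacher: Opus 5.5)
Your proposal has a genuine gap, located exactly at the step you flagged yourself. The endpoints $X_H$ of the edges changed in a phase are arbitrary vertices. In $H$ minus those edges they can lie on as many as $|X_H|/2$ distinct faces, so the enlarged DDG on $\partial H\cup X_H$ has no Monge structure. FR-Dijkstra, which works hole-pair by hole-pair, must then essentially relax all $\Theta(|X_H|^2)$ terminal-to-terminal entries. That is $\Theta(n)$ per query as soon as one piece used by the query contains the whole batch of $\sqrt n$ edges.

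Your remedy, cutting the changed edges out so that they sit on $O(1)$ holes, does not keep the boundary at $O(\sqrt n\log n)$. If you refine only the pieces that contain changed edges, the separators at depth $d$ cost $\min(2^d,k)\cdot O(\sqrt{n/2^d})$, which sums to $\Theta(\sqrt{nk})$, not $\tilde O(\sqrt n)$. I do not think this can be improved in general: for $k$ edges spread evenly over a $\sqrt n\times\sqrt n$ grid, each changed edge appears to force $\Omega(\sqrt{n/k})$ boundary around it. With $k=\sqrt n$ your query time becomes $\tilde O(n^{3/4})$. More generally, phases of length $k$ give query time $\tilde O(\sqrt{nk})$ against rebuild cost $\tilde O(n/k)$ per update. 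The optimum is $k=n^{1/3}$, which yields $\tilde O(n^{2/3})$, exactly the old FR/Klein bound the theorem is meant to beat. Your incremental ``portal'' variant fails for the same reason, since the portals are again arbitrary vertices.

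The missing idea is to genuinely \emph{update} each affected DDG in $\tilde O(\sqrt{|P|})$ time without rewriting its $\Theta(|P|)$ entries. The paper first does this for weight decreases (Theorem~\ref{Thm:IncDDG}):
\begin{itemize}
\item For the decreased edge $e^t=(u^t,v^t)$, it computes $d(b,u^t)$ and $d(v^t,b)$ for all $b\in\partial P$ by FR-Dijkstra on the siblings' DDGs.
\item It shows (Lemma~\ref{lma:consecVerticesS}) that, for each source $b$ and hole $h$, the boundary vertices whose distance from $b$ improves form at most two intervals of $Q(h)$, separated at $r_{v^t,h',h}$.
\item It locates these intervals by binary search with an $\tilde O(1)$ region test, and records them in a last-interval-detection structure (Theorem~\ref{thm:lastInterval}).
\item A DDG entry $d(b,b')$ is then recovered from the last update from which $b'$ profited.
\end{itemize}

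Offline knowledge is used only afterwards, to reduce fully dynamic to incremental (Appendix~\ref{sec:reductionOffline}). The reduction uses a dyadic time hierarchy with weights $\max_{j\in[2^i]}w^{\lfloor t\rfloor_i+j}$. Descending one level amounts to at most $2^i$ weight decreases, which are rolled back and reapplied, and full persistence de-amortizes this. Plugging these $\tilde O(\sqrt r)$-update DDGs into Theorem~\ref{thm:FRDijkstra} with $r=n$ gives the $\tilde O(\sqrt n)$ bounds.
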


\begin{remark}
More generally, for any $\delta \in [0,\frac{1}{2}]$, we obtain an offline dynamic APSP data structure with worst-case update time $\tilde{O}(n^{\delta})$ and query time $\tilde{O}(n^{1-\delta})$. By the lower bound by Abboud and Dahlgaard \cite{abboud2016popular} which extends to the offline setting, this is tight for all choices of $\delta$. Further, we improve for every constant $\delta \in (0, \frac{1}{2}]$ on the previous state-of-the-art.
\end{remark}

Even more interestingly, we achieve our result by designing a new algorithm that efficiently maintains DDGs. We remind the reader that so far, even maintaining an encoding undergoing a sublinear number of changes in $n$ was not known.

\begin{theorem}\label{thm:mainDDG}
Given a planar embedded graph $G$ along with an $r$-capped recursive division $\mathcal{R}$ of $G$, there exists a data structure that maintains the DDG of every node $H$ in $\mathcal{R}$ in the offline setting, such that each update is processed in $\tilde{O}(\sqrt{r})$ worst-case update time, and the weight of each edge $(x,y)$ with $x,y \in \partial H$ can be returned in query time $O(\log n)$. Moreover the data structure is deterministic and has a preprocessing time $\tilde{O}(n)$. 

If the graph is incremental, then our data structure also works in the online setting.
\end{theorem}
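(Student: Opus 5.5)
The plan is to reduce the offline fully dynamic problem to an \emph{incremental} one (only weight decreases) via a divide-and-conquer over time. Then I would maintain each DDG under decreases by an explicit update formula, with periodic rebuilding to bound the cost.

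\textbf{Offline to incremental.} Build a segment tree $\mathcal{T}$ over the time steps $1,\dots,T$ of the update sequence. For every edge $e$ and every maximal time interval $J$ on which $w(e)$ is constant, insert $(e,w(e)|_J)$ into the $O(\log T)$ canonical nodes of $\mathcal{T}$ that cover $J$. Now traverse $\mathcal{T}$ depth-first and maintain a graph $G_\nu$ for the current node $\nu$. In $G_\nu$, every edge carries weight $+\infty$ unless its weight-interval containing $\nu$'s time range has been inserted at $\nu$ or at an ancestor of $\nu$. Going down one step of $\mathcal{T}$ then only \emph{decreases} weights (from $+\infty$ to a finite value). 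Going up undoes them, which is cheap if every change is recorded on a stack, i.e.\ rollback. At a leaf $t$ of $\mathcal{T}$, $G_t$ equals $G$ after the $t$-th update. Each edge appears $O(\log T)$ times, so the total number of incremental operations is $\tilde{O}(T)$. Since the traversal order is known in advance, the work can be scheduled so that each real update is charged $\tilde{O}(1)$ incremental operations in the worst case. The online incremental case needs no segment tree at all, because there the update sequence is already incremental.

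\textbf{Incremental maintenance of one DDG.} Fix a node $H\in\mathcal{R}$. A decrease of an edge $(u,v)\in H$ to weight $w$ changes its DDG $D$ exactly by
\[
D'(x,y)=\min\bigl(D(x,y),\,D(x,u)+w+D(v,y)\bigr),\qquad x,y\in\partial H .
\]
So it suffices to know the column $D(\cdot,u)$ restricted to $\partial H$ and the row $D(v,\cdot)$, i.e.\ $O(|\partial H|)=O(\sqrt{r})$ numbers. Here $u,v$ are not boundary vertices of $H$. I would obtain $D(x,u)$ by descending the recursive decomposition: take the child piece of $H$ containing the edge, recurse into it, and combine its (already maintained) DDG with the DDGs of its siblings via FR-Dijkstra (Theorem~\ref{thm:FRDijkstra}). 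Since the boundary sizes $\sqrt{r/2^i}$ form a geometric series, this costs $\tilde{O}(\sqrt{r})$ overall. The edge lies in only one node per level of $\mathcal{R}$, so the total update cost is again a geometric series, $\sum_i\tilde{O}(\sqrt{r/2^i})=\tilde{O}(\sqrt{r})$. I would store $D$ implicitly as the initial DDG $D_0$ together with the list of applied ``rank-one'' min-plus updates, each given by its row, column and weight.

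\textbf{Main obstacle: query time.} Evaluating $D(x,y)$ from the implicit representation naively costs $O(k)$ after $k$ updates. My first attempt would be to rebuild $D_0$ from scratch every $\sqrt{r}$ updates, using the static $\tilde{O}(r)$ algorithm of \cite{fakcharoenphol2001planar,klein2005multiple}. That is $\tilde{O}(\sqrt{r})$ amortized per update, and it can be deamortized with two alternating copies, since the rebuild can be spread over the next $\sqrt{r}$ updates; in the offline case the rollbacks are also known beforehand. This bounds $k\le\sqrt{r}$, but still gives only $O(\sqrt{r})$ query time rather than $O(\log n)$. To reach $O(\log n)$, I would exploit that the row/column vectors $D(x,u)+w$ and $D(v,\cdot)$ are Monge-structured along the hole(s) of $H$. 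For each fixed $x$, the set of $y$ for which a given update is the minimizer should then form $O(1)$ contiguous intervals in the cyclic boundary order. If this holds, I would maintain, per boundary vertex $x$, a balanced interval tree of ``winning update'' intervals. Insertion would cost $O(\log n)$ amortized per $x$, i.e.\ $\tilde{O}(\sqrt{r})$ per update, and a query would be a single $O(\log n)$ lookup. Establishing this interval structure (or a substitute that is compatible with rollbacks) is the step I expect to be hardest.
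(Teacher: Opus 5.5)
\textbf{Verdict: there is a genuine gap.} Your architecture matches the paper's. You reduce offline to incremental by a hierarchy over time with rollback; the paper uses window maxima $\overline{w}^t_i$ instead of $+\infty$, to the same effect. You use the update rule $\min\bigl(D(x,y),\,d_H(x,u)+w+d_H(v,y)\bigr)$, computing the $O(\sqrt r)$ needed distances in the old version by extended FR-Dijkstra (Theorem~\ref{thm:FRDijkstraStronger}) over the sibling DDGs along the path to the leaf containing the edge. And you answer queries with a per-boundary-vertex interval structure. The problem is that the step you leave conditional is exactly the technical core of the proof, and it is needed for the update bound, not only the query bound. FR-Dijkstra reads $\tilde O(\sqrt{|X|})$ entries of each sibling DDG $X$. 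If evaluating an entry from your implicit representation costs $O(k)$, one update costs $\tilde O(k\sqrt r)$. Your fallback (rebuild every $\sqrt r$ updates, so $k\le\sqrt r$) therefore costs $\tilde O(r)$ per update, which is no better than recomputing from scratch. Balancing the rebuild period only gives $\tilde O(r^{3/4})$.

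Two ingredients are missing. The first is the structural lemma. For fixed $b\in\partial P$ and hole $h$, the set $S_{b,h}$ of vertices of $h$ that profit from $e^t$ consists of at most two intervals of $Q(h)$. The right object is the profit set at the moment of the update, not the set where an update is currently the minimizer, which fragments under later updates. A Monge or non-crossing argument covers $b$ on $h$, but Monge fails when $b$ lies on another hole $h'$. The paper splits $Q(h)$ at $r_{v^t,h',h}$, the endpoint on $h$ of the right-most path of the shortest-path tree from $v^t$, taken with respect to the face containing $h'$. It then shows that any $b$-to-$v_k(h)$ path must cross $T_{v^t}[v_i(h)]$ or $T_{v^t}[v_j(h)]$ (Lemma~\ref{lma:consecVerticesS}). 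Computing this split vertex requires precomputed cyclic orders of DDG edges (Lemma~\ref{lem:r_compute}).

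The second, harder ingredient is finding the interval endpoints in $\tilde O(1)$ time per pair $(b,h)$. Testing every $b'$ costs $O(r)$ per update. Binary search with membership tests alone fails, because every probe may miss $S_{b,h}$. The paper handles the case of two non-member probes with Lemma~\ref{lma:binarySearchTool}: it decides whether $e^t$ lies in the region bounded by $\pi_{b,v_i(h)}$, the segment of $h$, and $\pi_{b,v_j(h)}$. This in turn requires locating the last common vertex of two shortest paths. The paper does that with level-ancestor structures on shortest-path trees stored for every boundary vertex, and recomputed for $v^t$ and $u^t$ at each update.

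Two smaller issues. An amortized overwrite interval tree breaks under rollback: after an undo restores many runs, the next update can delete them all again, so the potential argument fails. You need a worst-case structure, such as the append-only Last Interval Detection structure (Theorem~\ref{thm:lastInterval}). Also, knowing the traversal order in advance does not by itself give worst-case time for the time hierarchy. A single structure cannot prepare the next subtree while serving the current one; the paper gets worst-case bounds by making the incremental structure fully persistent.
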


It is not hard to see from the discussion above how this result can be combined with Theorem \ref{thm:FRDijkstra} to derive Theorem \ref{thm:mainAPSP} as a corollary. Thus, the following sections are concerned with proving Theorem \ref{thm:mainDDG}.\\

Our result can be applied to obtain an improvement to the following variant of dynamic max flow/min st-cut:
\begin{corollary}\label{cor:MaxFlow}
Given an $n$-vertex undirected planar flow network $G$ with edge capacities, a source $s$, and a sink $t$, there is a deterministic data structure that can maintain the value of a max flow from $s$ to $t$ (equivalently, the value of a min $st$-cut) under either an offline sequence of edge weight changes or an online sequence of edge weight decreases. Each update and each query for the current max flow value can be handled in $\tilde O(\sqrt n)$ worst-case time.
\end{corollary}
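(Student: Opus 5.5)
We reduce the max-flow statement to dynamic distance queries in the planar dual, and then invoke Theorem~\ref{thm:mainAPSP} on a suitably modified dual graph. The basic tool is the classical duality for undirected planar graphs (Itai--Shiloach, Reif, Hassin). Fix the embedding of $G$ and let $G^*$ be its dual, where each dual edge $e^*$ inherits the capacity $c(e)$ as its length. A minimal $st$-cut in $G$ corresponds to a simple cycle in $G^*$ that separates the face containing $s$ from the face containing $t$. The min cut value therefore equals the shortest such separating cycle.

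To make this a distance query, we fix once and for all a path $\pi$ in $G$ from $s$ to $t$; it can be taken to be any simple path and does not depend on the weights. Following Reif and Italiano et al., we cut $G^*$ open along $\pi$. Every dual edge crossing $\pi$ is duplicated, each face incident to $\pi$ is split into a ``left'' and a ``right'' copy, and the result is a planar graph $G^*_\pi$ with $O(n)$ vertices.

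A standard fact is that a shortest separating cycle crosses $\pi$ an odd number of times, and it can be chosen to cross $\pi$ exactly once. Hence the min cut value equals
\[
\min_{f} \, d_{G^*_\pi}(f_L, f_R),
\]
where $f$ ranges over the faces along $\pi$. There are $O(n)$ such faces, so evaluating this minimum directly would cost $O(n)$ distance queries. We avoid this with the usual trick: add a super-source and a super-sink with infinite-weight edges to all left and all right copies. This breaks exactness, since the query could pair $f_L$ with $f'_R$ for $f \neq f'$. The fix is to use the fact that shortest paths from distinct left copies to their matching right copies may be chosen non-crossing, which makes the minimum over the $O(\sqrt{n})$ pieces compatible with a divide-and-conquer. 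The cleaner alternative is Italiano et al.'s reduction, which expresses the min cut as a \emph{single} FR-Dijkstra-style computation on the DDGs. Their computation runs in $\tilde O(n/\sqrt r)$ time given DDGs of an $r$-division of $G^*_\pi$ in which $\pi$'s copies are treated as boundary.

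The plan is therefore as follows. First, build an $r$-capped recursive decomposition of $G^*_\pi$ with $r=n$ capped appropriately, so that queries cost $\tilde O(\sqrt n)$. Second, maintain all DDGs with Theorem~\ref{thm:mainDDG}. Each capacity change in $G$ changes the weight of at most two dual edges of $G^*_\pi$, one per copy, so it costs $\tilde O(\sqrt n)$ worst-case time. A capacity decrease is a weight decrease in the dual, so the incremental online case transfers directly. Third, answer each max-flow query by running the Italiano et al. separating-cycle computation over the maintained DDGs, using Theorem~\ref{thm:FRDijkstra} as the inner engine, in $\tilde O(\sqrt n)$ time. Since no update changes the embedding, neither $\pi$ nor the cut-open graph ever changes. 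Preprocessing takes $\tilde O(n)$ time.

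The main obstacle is the query step. Min cut is not a single $x$-to-$y$ distance but a minimum over $O(n)$ pairs along $\pi$, so we must verify that the Italiano et al. procedure really uses only DDG edge-weight lookups and FR-Dijkstra runs. If it does, the query bound of Theorem~\ref{thm:FRDijkstra} applies with the $O(\log n)$ DDG query time from Theorem~\ref{thm:mainDDG}, giving $\tilde O(\sqrt n)$. My first attempt would be to recurse on $\pi$ using non-crossing shortest paths. Each level would perform $O(\log n)$ FR-Dijkstra runs restricted to subgraphs whose DDGs are already present in the decomposition, for a total of $\tilde O(\sqrt n)$.
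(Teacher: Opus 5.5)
\textbf{There is a genuine gap.} Your reduction rests on a claim that is false as stated, and the query step, which is the real content of the corollary, is left open.

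\textbf{The crossing-once claim.} A minimum separating cycle can be chosen to cross the cutting path exactly once only if that path is a \emph{shortest} path under the current capacities. The exchange argument replaces the part of the cycle between its first and last contact with the path by the path segment. That step needs the segment to be no longer than either arc of the cycle.

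For an arbitrary fixed $\pi$ the claim fails. Take a grid with $s$ in the leftmost column and $t$ in the rightmost column. Give the horizontal edges between two middle columns capacity $1$ and every other edge capacity $n$. Let $\pi$ be an S-shaped $s$-$t$ path that crosses between those two columns three times. The unique min cut then crosses $\pi$ three times. Every other separating cycle, in particular every one crossing $\pi$ once, contains an edge of capacity $n$. Hence $\min_f d_{G^*_\pi}(f_L,f_R)$ strictly overestimates the cut value.

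A shortest path depends on the weights, so neither the path nor the cut-open graph can be fixed once and for all. This is why the structure the paper builds on reruns a coarse version of Reif's algorithm after each update rather than freezing a cut-open graph.

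\textbf{The query step.} You flag this step as unverified yourself, and the sketch does not supply it, for three reasons:
\begin{itemize}
\item Declaring the copies of $\pi$ to be boundary vertices can create $\Theta(n)$ boundary vertices. This voids both the $\tilde O(n/\sqrt r)$ bound of FR-Dijkstra and the $\tilde O(\sqrt r)$ bound of Theorem~\ref{thm:mainDDG}, since both rely on $O(\sqrt r)$ boundary vertices per piece.
\item The subgraphs in Reif's divide-and-conquer are bounded by weight-dependent shortest paths. They are not unions of pieces of your fixed decomposition, so their DDGs are not "already present".
\item Any computation on DDGs only sees cycles through boundary vertices. Separating cycles lying strictly inside one piece are therefore missed.
\end{itemize}

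\textbf{How the paper proceeds.} The paper does not rebuild the reduction. It treats Italiano et al.\ as a black box whose worst-case cost per update consists of three parts:
\begin{itemize}
\item recomputing the DDG of the affected piece, which now costs $\tilde O(\sqrt r)$ by Theorem~\ref{thm:mainDDG};
\item recomputing the best separating cycle lying entirely inside that piece, which is handled by a \emph{recursive} instance of the max-flow structure on the piece;
\item a coarse Reif over the boundary vertices via FR-Dijkstra, in $\tilde O(n/\sqrt r)$ time.
\end{itemize}
Their amortized $r$-division rebuilds disappear because only weights change. This gives $T(n)=\tilde O(n/\sqrt r+\sqrt r)+T(r)$. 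Taking $r$ a constant fraction of $n$ yields $\tilde O(\sqrt n)$ by a geometric sum.

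Your plan is missing two ideas: the recursion that handles in-piece cycles, and rerunning Reif on the boundary vertices instead of fixing a cut-open graph.
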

The previous best update/query time bound was $\tilde O(n^{2/3})$ by Italiano et al.\cite{ItalianoNSW11}. Corollary~\ref{cor:MaxFlow} follows easily by replacing the trivial dynamic DDG algorithm in~\cite{ItalianoNSW11} with our data structure; the proof can be found in Appendix~\ref{sec:DynMaxFlow}.

\paragraph{Conceptual Contribution.} We believe that our techniques and conceptual ideas give first indications towards designing such algorithms in the online setting: 
\begin{itemize}
    \item Our main conceptual change is a shift in perspective to view DDGs as a data structure and allow for small (but non-constant) query times. This is in stark contrast to previous literature where the DDGs where mostly seen as complete graphs that where explicitly constructed.
    \item Further, our online incremental DDG data structure can be used to reduce the online dynamic APSP problem to the online decremental DDG setting. More precisely, if one can give an online algorithm $\mathcal{A}$ that maintains a DDG on a graph subject to edge weight increases with worst-case update time $\tilde{O}(n^{1-\epsilon})$ for any $\epsilon > 0$, query time $\tilde{O}(1)$, and preprocessing $\tilde{O}(n)$, then there exists an online algorithm $\mathcal{B}$ for the dynamic APSP problem with worst-case update/query time $n^{2/3-\Omega(\epsilon)}$. A precise statement and proof can be found in Section \ref{sec:reductionRealThing}.
\end{itemize}

\section{Related Work}

\paragraph{Dynamic Shortest Paths in Planar Graphs.} As stated in the introduction, the only two dynamic data structures for APSP in planar graphs are currently \cite{fakcharoenphol2001planar, klein2005multiple}. Both achieve update time $\tilde{O}(n^{\delta})$ and query time $\tilde{O}(n^{1-\delta/2})$ for $\delta \in [0, 1]$. In particular, for $\delta = \frac{2}{3}$, they achieve update and query time $\tilde{O}(n^{2/3})$. We note that various improvements followed that improved the above algorithms by logarithmic or doubly-logarithmic terms or generalized the result \cite{mozes2010shortest,kaplan2012submatrix, gawrychowski2018improved}.

A lower bound for the exact problem, given in \cite{abboud2016popular}, stipulates that 
$\tilde{\Omega}(n^{\delta})$ update time and 
$\tilde{\Omega}(n^{1-\delta})$ query time is necessary if one believes in the APSP conjecture. 

For undirected planar graphs in  the setting where one allows for a $(1+\epsilon)$-approximation on the distance queries, Abraham, Chechik and Gavoille \cite{abraham2012fully} presented a data structure that achieves worst-case update and query time $\tilde{O}(\sqrt{n} \log W/\epsilon^2)$ time where $W$ is the aspect ratio of $G$. Karczmarz \cite{karczmarz2018decrementai} achieved similar time bounds for the $(1+\epsilon)$-approximate APSP problem for a larger family of graphs (so-called $O(\sqrt{n})$-separable graphs), however, the algorithm only works for \emph{decremental graphs}. Further, the exact dynamic single-source shortest paths problem has been studied \cite{charalampopoulos2020single}. In very recent work by Filtser et al. \cite{filtser2024near}, a deterministic algorithm achieving subpolynomial amortized update and query time $n^{o(1)}/ \poly(\epsilon)$ was given.\\

\paragraph{Dynamic All-Pairs Shortest Paths in General Graphs.} The dynamic exact APSP problem has received extensive attention over the past years, also in incremental and decremental settings \cite{ausiello1991incremental, king1999fully, demetrescu2001fully, baswana2002improved, demetrescu2004new, thorup2005worst,  abraham2017fully,probstWulffNilsenwcAPSP, EvaldFGW21}. However, conditional lower bounds \cite{abboud2016popular, henzinger2015unifying} are quite strong and, in particular, stipulate that exact APSP requires either 
$\tilde{\Omega}(m)$ update or query time. This motivates the more restricted but powerful planar setting. \\

\paragraph{Dynamic Offline Algorithms.}
The purpose of studying dynamic algorithms in an offline setting is twofold. First it captures some of the inherent difficulties of the online setting, thus making any progress for the first provides significant insight towards designing an online dynamic algorithm. Secondly most of the lower bound results for standard dynamic setting also hold for offline setting. Thus, giving an efficient algorithm for the offline dynamic setting shows that no such conditional lower bound exists. Following this, substantial effort has been taken to design dynamic algorithms in offline setting. 
In ~\cite{CGHPS20} authors provide a fully dynamic offline algorithm that computes $O(\log^{4t} n)$ approximation of all pair max flow/min cut in average $\tilde{O}(m^{1/t +1})$ update and query time . They also propose a fully dynamic offline algorithm for all pair shortest path that computes $(2r-1)^t$ approximation in average $\tilde{O}(m^{1/t +1}n^{2/r})$ update and query time for $t,r\ge 1$. ~\cite{PSS17} presents a fully dynamic offline algorithm computing 3 edge and vertex connectivity in an undirected graph in time $O(\log n )$ per update. ~\cite{LS13,KL15} studies the dynamic connectivity and reachability in graph timelines in a semi offline setting. Here the updates are given upfront while the queries can arrive online. For the connectivity problem ~\cite{KL15} provides a data structure with $O(m+t\log n)$ preprocessing time and $O(\log n)$ query time. For rechability in undirected graph ~\cite{LS13} gives a randomized algorithm that has $O(t \log t \log \log t \log n+m)$ preprocessing time and can answer all queries in $O(\log n \log \log t)$ time.

\paragraph{Static Planar Graphs.} Finally, we point out that there has been extensive work on $(1+\epsilon)$-approximate and exact static distance oracles culminating recently in a data structure that can be constructed in $n^{1+o(1)}$ space and answer distance queries exactly in subpolynomial worst-case query time \cite{charalampopoulos2019almost, LongP21}.

\section{Preliminaries}

In this article, we assume that every graph $H$ is planar, directed and weighted with positive weights, is already embedded in the plane, is triangulated and undergoing edge weight updates given by an efficient encoding of the weight update sequence. We call a dynamic graph $H$, \emph{incremental}/ \emph{decremental} if the weight update sequence monotonically decreases/ increases edge weights in $H$ over time. In the offline setting, we assume that the update sequence is given with the input, in the online setting the sequence is revealed on-the-go. We use $G=(V,E,w)$ to denote the input graph, where $V$ is the set of vertices with $n = |V|$, $E$ is the set of edges and $w: E\rightarrow \mathbb{R}$ is the edge weight function. Assume constant time arithmetic on the weights. We let $d_G(u,v)$ denote the shortest path distance from $u$ to $v$ in $G$ and assume that all shortest paths at any time in $G$ are unique.
This can be ensured using the deterministic perturbation technique in~\cite{UniqueShortestPaths}).

Finally, we introduce the concept of full persistence. 

\begin{definition}[Fully-persistent]
We say a data structure undergoing an update sequence is \emph{fully-persistent} if every version of the data structure can be accessed and modified.
\end{definition}

\section{Online Incremental DDG}

In this section, we show the following result which we will use in our reductions in \cref{sec:reductionRealThing} and \cref{sec:reductionOffline} to get the main results of the paper.

\begin{theorem}\label{Thm:IncDDG}
Given input graph $G$, and an $r$-capped recursive decomposition $\mathcal{R}$ of $G$ with roots $G_1, G_2, \dots, G_k$.

Then, there exists a deterministic data structure that, given any $G_i$, maintains the DDGs of all pieces in the tree rooted at $G_i$. Its preprocessing time is $\tilde{O}(n)$ and it handles each edge weight decrease in $G_i$ in worst case update time $\tilde{O}(\sqrt{r})$. After every update, a query for the weight of any DDG edge of any piece in the tree rooted at $G_i$ can be answered in worst-case time $\tilde{O}(1)$. 

The data structure is fully-persistent.
\end{theorem}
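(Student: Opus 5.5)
The central observation is that an edge weight decrease can only create new shortest paths that pass through the updated edge. Concretely, suppose the edge $e=(a,b)$ of $G_i$ has its weight lowered to $w'(e)$. For every piece $H$ in the tree rooted at $G_i$ with $e\in H$, and every pair $u,v\in\partial H$, the new distance is
\[
d'_H(u,v)=\min\bigl\{d_H(u,v),\; d'_H(u,a)+w'(e)+d'_H(b,v)\bigr\}.
\]
Moreover, $d'_H(u,a)=d_H(u,a)$ and $d'_H(b,v)=d_H(b,v)$ may be assumed without loss of generality. The reason is that a shortest path using $e$ uses it only once, since weights are positive.

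So we never rewrite the $\partial H\times\partial H$ matrix. Instead, each DDG of $H$ is represented implicitly as the original DDG plus a list of "improvements". Each improvement is stored as a triple consisting of a vector $(d_H(u,a))_{u\in\partial H}$, the value $w'(e)$, and a vector $(d_H(b,v))_{v\in\partial H}$. A query for $(u,v)$ takes the minimum over the stored terms.

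The list can have many entries, so this needs care. My plan is to handle it with full persistence and a recursion over time in the style of offline dynamic programming, but online. Concretely, when $e$ is decreased we recompute the vectors $d_H(\cdot,a)$ and $d_H(b,\cdot)$ restricted to $\partial H$ for every ancestor piece $H\ni e$.

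These vectors are single-source/single-sink distances to boundary vertices, taken in the current graph. They can be obtained with FR-Dijkstra (Theorem~\ref{thm:FRDijkstra}) on the recursive decomposition restricted to $H$, using the currently maintained DDGs of the children. This costs $\tilde O(|\partial H|)=\tilde O(\sqrt{|H|})$ per piece, provided child DDG entries can be queried in $\tilde O(1)$. The piece sizes along the root-to-leaf path form a geometric series, so the sum over levels is $\tilde O(\sqrt r)$.

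The main obstacle is keeping DDG queries $\tilde O(1)$ after many updates, since the naive min over all improvement triples grows linearly with the number of updates. My first attempt is to replace the stored DDG by the per-pair value
\[
D_H(u,v)=\min\bigl(D_H(u,v),\,x_u+w'+y_v\bigr).
\]
Updating this explicitly costs $|\partial H|^2=r$ per update, which is too much.

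Instead I would exploit planarity. For a fixed $a,b$ inside $H$ and boundary vertices on $O(1)$ holes, the matrix $x_u+y_v$ is an outer-sum rank-one update. Its minimum with a Monge-like DDG remains piecewise structured. For each $u$, the set of $v$ where the new term wins should be an interval in the cyclic boundary order. This follows from non-crossing shortest paths, using the uniqueness assumption and the property that the shortest paths from $u$ through $e$ and the old shortest paths cannot cross twice.

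I would therefore store, for each row $u$, a persistent balanced search tree over boundary positions $v$. Each tree holds "range-min with a tag (source update id)". A new update inserts $O(1)$ intervals per row, giving $O(|\partial H|\log n)=\tilde O(\sqrt{|H|})$ work per piece. A query for $(u,v)$ finds the responsible update id in $O(\log n)$ and returns $x_u+w'+y_v$ from that update's stored vectors.

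Proving the interval property, including the handling of multiple holes by splitting into $O(1)$ cases per hole pair, is the step I expect to be delicate. Interval endpoints are found by binary search in $O(\log n)$ per row.

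Full persistence comes from using path-copying persistent trees and persistent arrays throughout. Every modification is $\tilde O(\sqrt r)$ pointer changes, so every version stays accessible and modifiable at the same cost. Preprocessing is the static $\tilde O(n)$ DDG computation from \cite{fakcharoenphol2001planar,klein2005multiple}, which yields the initial trees.
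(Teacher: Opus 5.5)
You have the paper's architecture: the identity $d'_H(u,v)=\min\{d_H(u,v),\,d_H(u,a)+w'(e)+d_H(b,v)\}$, (extended) FR-Dijkstra down the decomposition to get the two boundary vectors in $\tilde O(\sqrt{|H|})$ per piece, one ``latest interval containing $v$'' structure per row and hole (the paper's $\mathcal{D}_{b,h}$), and queries answered by evaluating $x_u+w'+y_v$ for the responsible update.

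The genuine gap is the sentence ``interval endpoints are found by binary search in $O(\log n)$ per row,'' which is where all the difficulty lies. Let $S_u$ be the set of boundary vertices $v$ on a hole for which the new term wins. In $\tilde O(1)$ you can only test membership, $v\in S_u \iff d_{\mathrm{old}}(u,v)>x_u+w'+y_v$. Membership in an unknown cyclic interval is not a monotone predicate. Binary search needs a point known to lie in $S_u$, or, when two probes $v_i,v_j$ both fail, a way to tell which arc between them contains $S_u$. Membership probes give neither. If $S_u$ is a single vertex or empty, a membership-only strategy needs $\Omega(|\partial H|)$ probes for that row. That is $\Omega(|\partial H|^2)=\Omega(|H|)$ per piece, i.e.\ $\Omega(r)$ per update at the top level, no better than recomputation.

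Supplying this side information is the technical core of the paper's proof, and your proposal has no counterpart. In the paper's notation, $b$ is the row vertex and $e^t=(u^t,v^t)$.
\begin{itemize}
\item The paper first fixes the cut vertex $r_{v^t,h',h}$, the last vertex of a rightmost shortest path from $v^t$, computed by the FR-Dijkstra extension of Lemma~\ref{lem:r_compute}. Relative to this cut, $S_{b,h}$ is an interval by Lemma~\ref{lma:consecVerticesS}. Your interval claim is only asserted and never identifies this cut, which is what makes the search well defined across distinct holes.
\item It then uses Lemma~\ref{lma:binarySearchTool}: for two non-members $v_i,v_j$, decide in $\tilde O(1)$ whether $e^t$ lies in the region bounded by $\pi_{b,v_i}$, the arc of $h$, and the reverse of $\pi_{b,v_j}$. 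The answer tells the search which arc to recurse into.
\end{itemize}
Implementing that lemma requires data you never store:
\begin{itemize}
\item shortest-path in- and out-trees over child DDGs for every boundary vertex of every piece, and for $v^t,u^t$ after each update, with level-ancestor structures, so the last common vertex of $\pi_{b,v_i}$ and $\pi_{b,u^t}$ can be found by binary search on depth plus recursion down the decomposition;
\item the rotation system, to compare turn directions at that vertex;
\item the trees of the last improving edge $e^\ell$, for paths already shortened by earlier decreases.
\end{itemize}
Without this machinery, or some substitute way to find a seed point in each $S_u$, your update time is not $\tilde O(\sqrt r)$. One possible substitute is a SMAWK-type row-maximum computation of $d_{\mathrm{old}}(u,v)-y_v$ on the Monge blocks of the current DDG, which yields a member of every nonempty $S_u$. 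Distinct holes would again need separate treatment there.
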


In our construction, we focus on the first part of the theorem statement, and only at the end of this section we show how to modify our data structure to make it persistent.




\subsection{The Data Structure}

\paragraph{Preliminaries.} First, we need some additional notation. To avoid clutter, let $H = G_i$. We let the updates to $H$ be encoded by tuples $\{(e^1, \omega^1), (e^2, \omega^2), \dots \}$ where $e^i$ is an edge in $H$, and $\omega^i$ its new edge weight. 
In this section, we also need to argue about the embedding of $G$. We assume that we are given a combinatorial planar embedding, also often called rotation system, that specifies for each vertex $v$ in $G$, a cyclic order $e_0, e_1, \dots, e_{k-1}$ of the edges incident to $v$, such that in the planar embedding every $e_i$ is in between $e_{i-1}$ and $e_{i+1}$. We define a sharpest right-turn (left-turn) at $v$ from an edge $e$ to be the first out-going edge $e'$ incident to $v$ in counter-clockwise (clockwise) order starting in $e$. 

We further let for each piece $P$ (where we only talk for the rest of this section about pieces $P$ in the tree rooted at $H$), $P^t$ refer the version of the piece $P$ after the first $t$ updates have been applied to $H$. In particular $P^0$ is the initial piece.

For each piece $P$, we let $h(P) = \{h_1(P), h_2(P), \dots, h_{s_P}(P)\}$ be the constant number of holes such that each vertex in $\partial P$ is on some such hole. To ease indexing and avoid clutter, we allow $h_i(P)$ for any positive $i$, and let it refer to the $h_{i'}(P)$ such that $i'$ is the unique integer in $[1, s_h]$ derived from subtracting a multiple of $s_h$ from $i$. 

Further, for each hole $h \in h(P)$, we let $Q(h)$ denote a cyclic ordering $v_1(h), v_2(h),\ldots, v_{s_h}(h)$ of the (boundary) vertices of $h$ such that the interior of $h$ is to the right when traversing the vertices in this order. We also abuse notation slightly and denote by $[v_i(h), v_j(h)]$ for $1 \leq i \leq j \leq s_h$ the subset $\{v_i(h), v_{i+1}(h), \dots, v_j(h)\}$ of $Q(h)$ when the context is clear. Further, we also define a \emph{wrapped interval} $[v_i(h), v_j(h)]$ for the hole $h$, where if $1 \leq  i \leq j \leq s_h$ the interval is defined as above and if $1 \leq j < i \leq s_h$, is defined to be the interval $\{v_i(h), v_{i+1}(h), \dots, v_{s_h}(h), v_1(h), v_2(h), \dots, v_j(h)\}$. We define $(v_i, v_j)$ to be $[v_i, v_j] \setminus \{v_i, v_j\}$. We use the above indexing trick for $v_i$'s where $i$ is potentially larger than $s_h$ in conjunction with intervals, and wrapped intervals.

Finally, we require a simple data structure.

\begin{restatable}{theorem}{lastIntervalDet}[Last Interval Detection Data Structure]\label{thm:lastInterval}
Given a universe of size $U$, we can maintain a data structure that allows adding at time $i \geq 1$, a sub-interval $[l_i, r_i] \subseteq [1,U]$, and for queries at time $i$ where on input $x \in [1,U]$, the data structure outputs the largest integer $j \leq i$, such that $x \in [l_j, r_j]$ where we assume that $[l_0, r_0] = [1,U]$ (so $j$ is well-defined). If the length of the update sequence is polynomially bounded in $n$, then the data structure can be implemented with pre-processing time $\tilde{O}(1)$ and $\tilde{O}(1)$ worst-case update and query time.
\end{restatable}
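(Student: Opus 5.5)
We plan to reduce this to a "stabbing-max" structure: a balanced search tree over the universe $[1,U]$ in which every interval is stored at canonical nodes carrying its timestamp. Since intervals are only ever added and their timestamps $i$ strictly increase, the answer to a query $x$ is the maximum timestamp among all intervals containing $x$. Concretely, build a static complete binary tree (segment tree) over the leaves $1,\dots,U$. Because the update sequence is polynomially bounded in $n$, we may take $U \le \poly(n)$. Each node $\nu$, corresponding to a dyadic range $I_\nu$, stores a single value $m_\nu$, initialized to $0$; this encodes $[l_0,r_0]=[1,U]$ stored at the root with timestamp $0$.

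\textbf{Update.} To insert $[l_i,r_i]$ at time $i$, decompose it into the standard $O(\log U)$ canonical nodes whose ranges are maximal dyadic ranges contained in $[l_i,r_i]$. At each such node set $m_\nu \leftarrow i$. Since timestamps increase, this equals $\max(m_\nu,i)$, so no comparison is even needed.

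\textbf{Query.} On input $x$, walk from the root to the leaf of $x$ and return $\max_\nu m_\nu$ over the $O(\log U)$ nodes on this path.

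Correctness follows from the standard fact that $x\in[l_j,r_j]$ if and only if exactly one canonical node of $[l_j,r_j]$ lies on the root-to-$x$ path. Hence the maximum along the path is the largest $j$ with $x\in[l_j,r_j]$, and this $j$ is always at most the current time $i$.

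\textbf{Preprocessing.} The one step needing care is avoiding $\Theta(U)$ preprocessing, since $U$ may be as large as $n$ or more while we want $\tilde O(1)$. We would not materialize the tree. Instead, nodes are addressed implicitly by (level, index), and the values $m_\nu$ are kept in a dictionary containing only touched nodes, with absent keys read as $0$. A deterministic balanced BST keyed by node identifiers gives $O(\log U)=\tilde O(1)$ per access. Updates touch $O(\log U)$ nodes and queries read $O(\log U)$ nodes, so both run in $O(\log^2 U)=\tilde O(1)$ worst-case time, and preprocessing is $O(1)$.
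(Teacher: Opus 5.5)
Your proposal is correct, but it organizes the data along the opposite axis from the paper. The paper builds a balanced binary tree over \emph{time}: the leaves are the intervals in arrival order and the depth is $\lceil c\log_2 n\rceil$, which is exactly where the bound on the length of the update sequence is used. Each internal node stores a dynamic interval tree of the intervals in its subtree. A query descends from the root, going right whenever some interval in the right subtree stabs $x$, for $O(\log^2 n)$ per update and query. You instead build a segment tree over the \emph{universe} and keep one timestamp per canonical node. Monotonicity of timestamps makes an insertion a plain overwrite and a query a maximum along one root-to-leaf path. Your route is simpler, since it needs no secondary structures. It also directly absorbs several intervals inserted with the same timestamp, which happens when $S_{b,h}$ contributes a constant number of intervals per update; the paper has to spread these over separate times and pair them back. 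The trade-off is that your cost is governed by $\log U$ rather than by the number of insertions, and your justification for bounding it is a non sequitur. A polynomial bound on the length of the update sequence says nothing about $U$, so ``we may take $U\le\mathrm{poly}(n)$'' does not follow from the hypothesis. What you actually need is $\log U=\tilde O(1)$. This does hold where the theorem is applied ($U=|Q(h)|\le n$), and whenever universe elements are word-sized indices. In your construction the length hypothesis only serves to keep timestamps word-sized. The paper's time-indexed tree is independent of $U$, given constant-time comparisons of endpoints, so it proves the statement as literally phrased.
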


\paragraph{Initialization.} We start by computing initially, for each piece $P$ in the tree of $\mathcal R$ rooted at $H$, the DDG using the FR-Dijkstra algorithm \cite{fakcharoenphol2001planar} (we later show that we need to store slightly more information to make updates efficient).  We further initialize for each piece $P$, boundary vertex $b \in \partial P$ and for each hole $h \in h(P)$ a Last Interval Detection data structure $\mathcal{D}_{b, h}$ on the universe $Q(h) = \{ v_1(h), v_2(h),\ldots, v_{s_h}(h)\}$ as described in Theorem \ref{thm:lastInterval}.\\

\paragraph{Updates.} After the $t^{th}$ update, we compute and store for each piece $P$, with $e^t = (u^t, v^t) \in P$, the distances $d_{P^t}(b,u^t)$ and $d_{P^{t}}(v^t, b)$ for each $b \in \partial P$. Further, for each such piece $P$, $b \in \partial P$, and hole $h \in h(P)$, we find the set of vertices $S_{b,h} \subseteq Q(h)$ such that for each $b' \in S_{b,h}$,
\[
d_{P^{t-1}}(b,b') > d_{P^t}(b,u^t) + \omega^t + d_{P^t}(v^t,b').
\]
We say that a vertex $b' \in S_{b,h}$ \emph{profits} from $e^t$.

We will later see that $S_{b,h}$ consists of at most two subintervals of $Q(h)$. We add these two subintervals to the data structure $\mathcal{D}_{b, h}$. Before we show how to implement updates efficiently, let us illustrate a query of the data structure.\\

\paragraph{Queries.} Consider a query for the weight of an edge in the DDG of some piece $P$ from a boundary vertex $b$ to a boundary vertex $b'$. 

We can now simply query the data structure $\mathcal{D}_{b, h}$ for vertex $b'$. This allows us to locate the last time $t$ that the shortest path from $b$ to $b'$ (in $P$) was decreased in weight (note that technically the times of $\mathcal{D}_{b, h}$ and versions of the pieces $P$ do not align however it is straight-forward to pair them). Since the decrease must have resulted from decreasing the weight of edge $e^t$, we can now simply output 
\[
d_{P^t}(b,b') = d_{P^t}(b,u^t) + \omega^t + d_{P^t}(v^t,b').
\]
Since we stored $d_{P^t}(b,u^t), d_{P^t}(v^t,b')$ and $\omega^t$ at time step $t$, a look-up suffices to recover them once $t$ is known. Since $b'$ has not profited from any edge thereafter, we conclude that the returned distances is the current distance.\\

\paragraph{Computing Distances for Updates Efficiently.} It remains to give the precise algorithm to update the data structure when processing the $t^{th}$ update to $H$. Here, we focus on updating a single affected piece $P$ (i.e. $e^t \in P$) in $\tilde{O}(\sqrt{|P|})$ time. It then follows immediately that we can update all affected pieces in time $\tilde{O}(\sqrt{r})$. 

We start by describing the first step of our update procedure, that is to compute for each vertex $x \in \partial P$, the distances $d_{P^t}(x,u^t)$ and $d_{P^t}(v^t, x)$. Note however, that we can find these distances in the graph $P^{t-1}$, since shortest paths to $u^t$ do not need to use an out-going edge of $u^t$. More precisely, these distances can be found in $P^t \setminus \{e^t\} = P^{t-1} \setminus \{e^t\}$. An analogous observation also holds for the distances $d_{P^t}(v^t, x)$, i.e. $d_{P^t}(v^t, x) = d_{P^{t-1}}(v^t, x)$. 

It thus remains to compute these distances in the graph $P^{t-1}$. To this end, let $P_1, P_2, \dots, P_k$ be the unique path in $\mathcal{R}$ from piece $P = P_1$ to the leaf piece $P_k$ containing $e^t$. Further, for each such $P_i$, let $P_{i,1}, P_{i,2}, \dots P_{i,k_i}$ be the siblings of $P_i$ in $\mathcal{R}$ (this set does not contain $P_i$ itself). Let $\overline{P_{i,1}}, \overline{P_{i,2}}, \dots, \overline{P_{i,k_i}}$ be the dense distance graphs of all siblings of $P_i$. 

Next, note that $\bigcup_{\ell,j} P^{t-1}_{\ell,j}$ forms exactly the graph $P^t \setminus \{e^t\}$ by our discussion above and the fact that each leaf of $\mathcal{R}$ contains exactly one edge (see Definition \ref{def:cappedRDecomp}). Further note that our data structure already holds the DDGs $\overline{P^{t-1}_{i,1}}, \overline{P^{t-1}_{i,2}}, \dots, \overline{P^{t-1}_{i,k_i}}$.

Then, to find the distances $d_{P^t}(v^t, x)$ for $x \in \partial P$, we can invoke the following version of FR-Dijkstra which is a stronger version of Theorem \ref{thm:FRDijkstra}.

\begin{theorem}[Extended FR-Dijkstra, see \cite{fakcharoenphol2001planar, kaplan2012submatrix}]\label{thm:FRDijkstraStronger}
Given a graph $G$, an $r$-capped recursive decomposition $\mathcal{R}$ a set of edge-disjoint pieces $X_1, X_2, \dots, X_k$ in $\mathcal{R}$, along with a data structure that answers queries for edge weights in their corresponding dense distance graphs $\overline{X_1}, \overline{X_2}, \dots \overline{X_k}$ in $\tilde{O}(1)$ time. 

Then, there exists an algorithm that given some $v \in \bigcup_{j} \partial X_{j}$, computes a shortest path tree $T_v$ in $\bigcup_{j} \overline{X_{j}}$ in time $\tilde{O}(\sum_{j} \sqrt{|X_j|})$. 
\end{theorem}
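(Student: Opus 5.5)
The statement is Theorem~\ref{thm:FRDijkstraStronger}, the extended FR-Dijkstra. My plan is to reduce it to the standard FR-Dijkstra machinery of \cite{fakcharoenphol2001planar}, with the refinement via Monge heaps of \cite{kaplan2012submatrix}. The only DDG access is through the $\tilde{O}(1)$-time edge-weight oracle, and the work is charged per piece as $\tilde{O}(\sqrt{|X_j|})$ rather than $\tilde{O}(|\partial X_j|^2)$. I run Dijkstra on the union graph $\bigcup_j \overline{X_j}$, whose vertex set is $B=\bigcup_j \partial X_j$, starting from $v$. I never scan the dense edge sets explicitly. Instead, for each piece $X_j$ I keep a local structure that, over the course of the whole run, reveals the vertices of $\partial X_j$ in increasing order of tentative distance through $X_j$. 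This structure performs $\tilde{O}(|\partial X_j|)$ operations in total, each costing $\tilde{O}(1)$ oracle queries. A global heap holds the minimum key of each piece. Because the pieces are edge-disjoint members of $\mathcal{R}$, each $X_j$ is an $r'$-piece for some $r'$, so $|\partial X_j| = O(\sqrt{|X_j|})$, and summing gives the claimed time.

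Key steps, in order:
\begin{enumerate}
\item \textbf{Monge decomposition.} For each $X_j$, I use that all of $\partial X_j$ lies on $O(1)$ holes. By the standard argument (non-crossing shortest paths in a planar piece), the distance matrix restricted to a pair of holes, after cutting each hole at a suitable point, decomposes into $O(\log |\partial X_j|)$ Monge or partial-Monge submatrices. These are organized recursively by halving the cyclic order of each hole.
\item \textbf{Bipartite Monge heaps.} On each Monge submatrix with row set $A$ and column set $C$, I maintain the Monge heap of \cite{fakcharoenphol2001planar,kaplan2012submatrix}. It supports two operations: activating a row $a$ once its final distance $d(a)$ is known, and extracting the column $c$ minimizing $d(a)+\overline{X_j}(a,c)$ over active rows. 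Each costs $\tilde{O}(1)$ amortized oracle calls. The standard implementation uses the SMAWK / total-monotonicity property: the argmin row is monotone in the column, so a binary search tree over active rows of their dominance intervals suffices. Since row distances are activated in nondecreasing order, a new row's interval is a contiguous range, found by binary search.
\item \textbf{Outer Dijkstra.} Globally, I pop the overall minimum. If the popped column vertex is already finalized, I discard it and re-extract from that heap. Otherwise I finalize it and activate it as a row in every Monge heap of every piece containing it. Correctness follows from the usual Dijkstra argument, because each heap reports exactly the minimum relaxation over its finalized rows. Recording the predecessor row for each column yields $T_v$.
\end{enumerate}

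The main obstacle is the Monge decomposition with only oracle access and multiple holes. I must verify that the submatrices are genuinely Monge for pairs of holes and for wrapped orderings; here I would follow the cut-the-hole argument of \cite{mozes2010shortest,kaplan2012submatrix}. I must also check that each boundary vertex appears in only $O(\log)$ submatrices, so that total activations per piece stay at $\tilde{O}(|\partial X_j|)$. Everything else is bookkeeping.
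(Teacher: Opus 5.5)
\textbf{There is a gap at Step~1, and it is the step you flag and then defer.} Your outer machinery (Monge heaps per block, a global heap over pieces, predecessor pointers) is the standard FR-Dijkstra that the paper cites. For boundary vertices on a single hole it goes through. The gap is Step~1 for two \emph{distinct} holes $h_1\neq h_2$ of a piece.

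The ``standard argument'' forces two shortest paths with interleaved endpoints to share a vertex only when all four endpoints lie on one face. Between two holes, the paths $a\to c'$ and $a'\to c$ can pass on opposite sides of a hole, so nothing is forced. Worse, the correct place to cut the target hole depends on the source row. For $u$ on $h_1$ it is $r$, the last vertex of the rightmost shortest path from $u$ to $h_2$ (Definitions~\ref{def:rightmostSP} and~\ref{def:r}).

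A concrete example: take a rotationally symmetric annulus where $d(a,c)$ is a constant plus the angular distance from $a$ to $c$. The shortest path tree from $u$ splits at the point antipodal to $u$. Take rows at angles $0, 0.1$ and columns at $0.45, 0.55$ (in turns). Ignoring the common constant, $M[1,1]+M[2,2]=0.9>0.8=M[1,2]+M[2,1]$, so this block violates the Monge inequality, while blocks away from the split satisfy it. Since the split point moves with $u$ (here it rotates once around $h_2$ as $u$ goes once around $h_1$), no single cut per hole makes all blocks of a halving hierarchy Monge. The blocks straddling the staircase $u\mapsto r(u)$ fail. Locating that staircase, or deciding which entries of a ``partial-Monge'' block are defined, requires knowing $r(u)$ for every row, and an oracle for DDG edge weights does not give you that.

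\textbf{How the paper handles it.} This is exactly the detail the paper supplies in Section~\ref{sec:FRDijkstraHoles}, and it does so per activation rather than by a static decomposition. When a vertex $u\in h_1$ is activated, the interval $R$ of $h_2$ is split at the endpoint $r$ of the rightmost DDG edge out of $u$. Within each of the two sub-intervals, the vertices that adopt $u$ as their new parent are contiguous, by an argument like that of Lemma~\ref{lma:consecVerticesS}. So FR's binary search works again.

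To obtain $r$ in $O(1)$ time, Section~\ref{subsec:FRDijkstraLeftRight} precomputes, bottom-up over $\mathcal{R}$, the cyclic order of the DDG edges leaving each boundary vertex. It does this with a sharpest-right-turn DFS of the shortest path tree that FR-Dijkstra already builds in the parent piece, so the asymptotic cost does not increase.

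Your proof needs this ingredient, or an equivalent source-dependent split point computable in $\tilde{O}(1)$ time from precomputed data. An unspecified cut-the-hole argument does not supply it.
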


It is not hard to see that this not only gives all distance $d_{P^t}(v^t, x)$ when invoked for vertex $v^t$, but also with a single invocation of this algorithm, we can find all distances $d_{P^t}(x, u^t)$ by reversing all edges in the dense distance graphs and invoking the theorem again for $u^t$. Thus, the total time to find all such distances for our fixed piece $P$ is $\tilde{O}(\sqrt{|P|})$.\\

\paragraph{Updating $\mathcal{D}_{b,h}$ Efficiently.} Let us now describe how to update the data structures $\mathcal{D}_{b,h}$ for each $b \in \partial P$ and $h \in h(P)$. We first point out that we only need to update the data structures for affected pieces. 

Let us fix $P, b$ and $h$ for the rest of this section and let $h'$ be the hole in $h(P)$ with $b$ on $h'$. Note that for every $b' \in h$, we could query the distances $d_{P^{t-1}}(b,b')$ in the current version of the data structure. We could then verify using the information computed in the last paragraph whether $d_{P^{t-1}}(b,b') < d_{P^t}(b,u^t) + \omega^t + d_{P^t}(v^t, b')$ for each $b'$. However, we aim to identify the entire set $S_{b,h}$ of all such vertices $b' \in h$ in $\tilde{O}(1)$ time. To this end, we first need to prove that $S_{b,h}$ consists of at most two subintervals of $Q(h)$ so that we can represent this set by the $O(1)$ endpoints of such intervals. This will allow us to update each affected data structure $\mathcal{D}_{b,h}$ in $\tilde{O}(1)$ time and thus update all such affected data structures in time $\tilde{O}(\sqrt{r})$ since we only have to perform this operation once for each affected $P$ and every vertex $b$ on $h(P)$.

Here, we prove an even more powerful statement that we further build upon in the next section when we show how to compute $S_{b,h}$ efficiently. We need some additional notation and we start with the concept of a right-most shortest path (which we define in the standard way, see for example \cite{klein2005multiple}).

\begin{definition}[Right-most Shortest Path]\label{def:rightmostSP}
Given a piece $P$, a vertex $v$ in $P$, and distinct holes $h, h' \in h(P)$, let $T_{v, \mid h}$ be the shortest path tree from $v$ in $P$ restricted to having the set of leaf vertices consist only of the vertices in $h$.

Removing every edge and vertex of $P$ not belonging to $h$ or to $T_{v, \mid h}$ gives a subgraph with $|h|$ faces, excluding $h$ itself. For any such face $f$, the \emph{right-most (shortest) path} $\pi_{P,T_{v, \mid h},f} = \{ e_1, e_2, \dots , e_k\}$ in $T_{v, \mid h}$ with respect to $f$ is the path that is obtained by the following procedure:
\begin{itemize}
\item If $v$ is on $f$, let $e_0$ be an artificial edge $(x,v)$ with head $v$ and with tail $x$ being an artificial vertex in the interior of $f$. For $i = 1,2,\ldots,k$, recursively pick the $i^{th}$ edge $e_i$ on $\pi_{P,T_{v, \mid h},f}$ as the edge in $T_{v, \mid h}$ that makes the sharpest right turn at the head of $e_{i-1}$ with respect to $e_{i-1}$ itself. The recursion stops once a leaf vertex in $T_{v, \mid h}$ is reached (see the fat path in Figure~\ref{fig:r}(a)).

\item If $v$ is not on $f$, $e_1$ is the unique first edge on the path of $T_{v, \mid h}$ from $v$ to a vertex of $f$ and the remaining edges $e_2,e_3,\ldots,e_k$ of $\pi_{H,T_{v, \mid h},f}$ are obtained recursively as above.
\end{itemize}
\end{definition}

\begin{definition}\label{def:r}


Let $f$ be the face of the subgraph of $P$ in Definition~\ref{def:rightmostSP} such that $f$ contains $h'$. Then we define $r_{v,h,h'}$ to be the last vertex on the path $\pi_{P, T_{v, \mid h}, f}$; see Figure~\ref{fig:r}(a).
\end{definition}
\begin{figure}
\centerline{\scalebox{1.00}{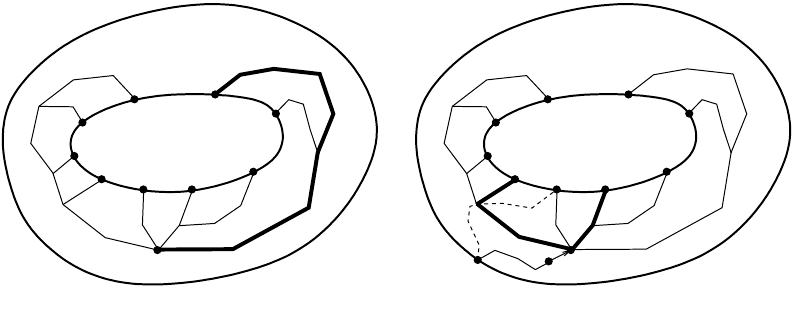}}
\caption{(a): Illustration of Definitions~\ref{def:rightmostSP} and~\ref{def:r}. In this example, the nine shortest paths from $v$ to $h$ induce nine faces (in addition to $h$) where $f$ is the face containing $h'$. The indices $i$ of vertices $v_i(h)$ are clockwise around $h$ in this drawing. (b): The contradiction reached in the proof of Lemma~\ref{lma:consecVerticesS}. The bold paths are $T_v[v_i(h)]$ and $T_v[v_j(h)]$ and the dashed path is a fictitious shortest path from $b$ to $v_k(h)$ that does not benefit from $(u^t,v^t) = e^t$.}\label{fig:r}
\end{figure}
Finally, let us prove the main statement of this section which implies that $S_{b,h}$ is the union of $O(1)$ intervals. This, together with an additional property below, allows us to apply binary search to find the endpoints of the intervals in $\tilde O(1)$ time, and then we can efficiently update $\mathcal D_{b,h}$.

\begin{lemma}\label{lma:consecVerticesS}
Given piece $P$, holes $h, h' \in h(P)$, and edge $e^t = (u^t, v^t) \in P$.

For any vertex $b$ on $h'$, and any two distinct vertices $v_i(h), v_j(h) \in S_{b,h} \subseteq Q(h)$ where $r_{v^t,h',h} \notin (v_i(h), v_j(h))$. Then $(v_i(h), v_j(h)) \subseteq S_{b,h}$.
\end{lemma}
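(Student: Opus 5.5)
We argue by contradiction with a planarity/crossing argument, in the spirit of Figure~\ref{fig:r}(b). Suppose $v_i(h),v_j(h)\in S_{b,h}$ and some $v_k(h)\in(v_i(h),v_j(h))$ satisfies $v_k(h)\notin S_{b,h}$. Then the shortest path $\pi_k$ from $b$ to $v_k(h)$ in $P^t$ does not use $e^t$: it is the old path, with $d_{P^t}(b,v_k(h))=d_{P^{t-1}}(b,v_k(h))\le d_{P^t}(b,u^t)+\omega^t+d_{P^t}(v^t,v_k(h))$. For $\ell\in\{i,j\}$, the new shortest path from $b$ to $v_\ell(h)$ is the concatenation of a shortest $b$-to-$u^t$ path, the edge $e^t$, and the path $T_{v^t}[v_\ell(h)]$. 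Here $T_{v^t}=T_{v^t,\mid h}$ is the shortest path tree from $v^t$, which we may take in $P^{t-1}$ by the remark preceding the lemma. Its length is strictly smaller than every path avoiding $e^t$.

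First, we establish the topology. The two tree paths $T_{v^t}[v_i(h)]$ and $T_{v^t}[v_j(h)]$ share a common prefix from $v^t$ and then split. Together with the boundary segment of $h$ from $v_i(h)$ to $v_j(h)$ passing through $(v_i(h),v_j(h))$, they bound a closed region $R$ of the plane. The faces of the subgraph in Definition~\ref{def:rightmostSP} lying between these two tree paths are exactly the faces incident to the hole segment $[v_i(h),v_j(h)]$. The hypothesis $r_{v^t,h',h}\notin(v_i(h),v_j(h))$ says that the face $f$ containing $h'$ is not one of the faces strictly inside this wedge. The main technical step is to verify from the rightmost-path definition (Definition~\ref{def:r}) that $h'$, and hence $b$, lies outside the interior of $R$, possibly on its boundary. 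The boundary cases, where $b$ lies on one of the tree paths or $v_k(h)$ coincides with a split vertex, need separate care. I expect this identification of $r_{v^t,h',h}$ with the face containing $h'$ to be the main obstacle. My first approach is to walk along the sharpest-right-turn path and show that it ends at the endpoint of the face's hole segment, so that the face containing $h'$ corresponds exactly to the wrapped interval starting at $r_{v^t,h',h}$.

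Next, we derive the crossing. Since $b$ is outside $R$ and $v_k(h)$ is on the part of $\partial R$ strictly inside the hole segment, the path $\pi_k$ must cross $\partial R$. It cannot cross through the interior of $h$, since $h$ is a hole and contains no edges of $P$. Hence $\pi_k$ meets $T_{v^t}[v_i(h)]$ or $T_{v^t}[v_j(h)]$ at some vertex $z$; say it meets $T_{v^t}[v_i(h)]$. This exchange step is easy. Uniqueness of shortest paths gives $d(b,z)+d(z,v_i(h))$ realized by the new path through $e^t$, so $d_{P^t}(b,z)$ is attained only via $e^t$ and is strictly smaller than any $e^t$-avoiding $b$-to-$z$ path, because $z$ lies on the $v^t$-side of the new path. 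But $\pi_k$ contains an $e^t$-free subpath from $b$ to $z$. Replacing it by the new prefix gives a strictly shorter $b$-to-$v_k(h)$ path, namely $d_{P^t}(b,u^t)+\omega^t+d(v^t,z)+d(z,v_k(h))$. This path uses $e^t$ and is shorter than $d_{P^{t-1}}(b,v_k(h))$, so $v_k(h)\in S_{b,h}$, a contradiction.

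Finally, $z$ could coincide with vertices before $v^t$ only if the shortest $b$-to-$u^t$ prefix is involved. That prefix lies outside the tree $T_{v^t}$, and we handle it using the same exchange argument.
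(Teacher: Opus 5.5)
Your overall route is the paper's. You assume some $v_k(h)\in(v_i(h),v_j(h))$ does not profit, argue that its $e^t$-free shortest path $\pi_k$ must meet $T_{v^t}[v_i(h)]\cup T_{v^t}[v_j(h)]$, and contradict uniqueness. The exchange part is correct; it is the paper's ``by uniqueness of shortest paths'' step written out. In fact $\pi_k[b,z]$ is itself an $e^t$-free shortest $b$--$z$ path, which already contradicts uniqueness. Since $z$ is chosen on $T_{v^t}[v_i(h)]$, your last paragraph about $z$ preceding $v^t$ is unnecessary. The genuine gap is the step you flag as the main obstacle: that $h'$, and hence $b$, lies outside the interior of $R$. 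Under the hypothesis as stated this is not merely unproved but false, and the lemma itself fails at an endpoint.

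\textbf{Why the step fails.} In $T_{v^t,\mid h}\cup h$, every face other than $h$ is bounded by exactly one hole edge. There are $|h|$ such faces and $|h|$ hole edges, and no face can be bounded by tree edges alone. The sharpest-right-turn walk traces $\partial f$ keeping $f$ on a fixed side, so $r$ is one endpoint of the hole edge $(v_m,v_{m+1})$ of $f$. The faces inside $R$ are exactly those on $(v_i,v_{i+1}),\dots,(v_{j-1},v_j)$. Hence $r\notin(v_i(h),v_j(h))$ still allows $f$ to be the face on $(v_i,v_{i+1})$ with $r=v_i$, or the face on $(v_{j-1},v_j)$ with $r=v_j$, depending on which endpoint $r$ is. If, as you plan, $f$ turns out to be the face on the hole edge leaving $r$, then $r=v_i$ is exactly such a case. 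Then $b$ lies inside $R$ and can reach the other endpoint of $f$'s hole edge without touching either tree path.

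\textbf{Counterexample.} Let $v^t$ have weight-$1$ edges to every $v_m$, so $T_{v^t,\mid h}$ is a star whose faces are the triangles $v^tv_mv_{m+1}$. Give the hole edges weight $\infty$. Put $h'\ni b$ and $u^t$ inside the triangle $f=v^tv_pv_{p+1}$, with edges $b\to u^t$ of weight $1$, $e^t$ dropping from $100$ to $1$, and $b\to x$ of weight $\varepsilon$, where $x$ is the vertex of $\{v_p,v_{p+1}\}$ other than $r$. Let $y$ be the other hole neighbour of $x$. Both $r$ and $y$ profit (distance $102\to 3$), and the open interval between them is $\{x\}\not\ni r$, yet $x$ does not profit.

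\textbf{Repair.} Exclude $r$ from the half-open interval $(v_i,v_j]$ or $[v_i,v_j)$, according to which endpoint of its hole edge $r$ is. Then $f\not\subseteq R$, and your crossing and exchange argument goes through. This corrected form is all that is needed to cut $Q(h)$ at $r$ into a linear order in which $S_{b,h}$ is an interval. The paper's one-line justification (``by definition of $r$ \dots'') passes over this same endpoint case.
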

\begin{proof}
We point out that we only consider the case where $b$ is on a hole $h' \neq h$. However, the case where $b$ on $h$ is simpler and can be straight-forwardly deduced from the description of the harder case by minor adaptions.
Assume for contradiction that a vertex $v_k(h) \in (v_i(h), v_j(h))$ is not in $S_{b,h}$, i.e. does not benefit from $e^t$. Then $d_{P^t}(b,u^t) + \omega^t + d_{P^t}(v^t,v_k(h))\leq d_{P^{t-1}}(b,v_k(h))$.

Let $T_{v^t}$ be the shortest path tree rooted at $v^t$ on $P$. By definition of $r_{{v^t},h',h}$ and by the assumption $r_{{v^t},h',h}\notin (v_i(h), v_j(h))$, every path from $b$ to $v_k(h)$ has to cross either the path $T_{v^t}[v_i(h)]$ or the path $T_{v^t}[v_j(h)]$; see Figure~\ref{fig:r}(b). In particular, any shortest path from $b$ to $v_k(h)$ that does not contain the edge $e^t$ has to cross one of these paths. By uniqueness of shortest paths, this is a contradiction.
\end{proof}

To see that $S_{b,h}$ consists of $O(1)$ intervals, it remains to observe that since $[v_i(h), v_j(h)]$ might be a wrapped interval, we might have to split it into two regular intervals before adding it to data structure $\mathcal{D}_{b,h}$.

\paragraph{Finding $S_{b,h}$ Efficiently.} Finally, we have to show how to compute $S_{b,h}$ efficiently, i.e. in $\tilde{O}(1)$ time. Similar to Lemma~\ref{lma:consecVerticesS}, we only describe how to find $S_{b,h}$ for $b$ on a hole $h' \neq h$. 
We now describe two essential sub-procedures to find $S_{b,h}$ efficiently:

First, we need to compute the vertex $r_{v^t,h',h}$ used in Lemma \ref{lma:consecVerticesS} in time $\tilde{O}(\sqrt{|P|})$ for each piece $P$ with $h, h' \in h(P)$. We give the proof in Appendix \ref{subsec:FRDijkstraLeftRight}.  

\begin{lemma}\label{lem:r_compute}
Given a piece $P$, distinct holes $h, h' \in h(P)$ and a vertex $v$ in $P$. We can find the vertex $r_{v,h,h'}$ in time  $\tilde{O}(\sqrt{|P|})$.
\end{lemma}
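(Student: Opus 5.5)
We reduce the computation of $r_{v,h,h'}$ to a small number of invocations of the Extended FR-Dijkstra algorithm (Theorem~\ref{thm:FRDijkstraStronger}) together with a left/right-aware tie-breaking rule. First we need the distances from $v$ to all boundary vertices of $P$. If $v\in\partial P$ this is a single call of Theorem~\ref{thm:FRDijkstraStronger} on the DDGs of the children of $P$. Otherwise we descend in $\mathcal{R}$ along the path from $P$ to a leaf piece containing $v$, exactly as in the update procedure, and use the DDGs of the siblings along that path. This also costs $\tilde{O}(\sqrt{|P|})$, since the piece sizes decrease geometrically. The output is a shortest path tree $T$ over the union of these DDGs, whose vertices are the boundary vertices of the pieces used. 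Every DDG edge of $T$ represents a shortest path inside a subpiece, and by uniqueness of shortest paths, $T$ is a compressed representation of $T_{v,\mid h}$ (together with the rest of $T_v$).

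Next we locate the face $f$ containing $h'$ and follow the right-most path toward it. The idea is to simulate, on the compressed tree, the sharpest right-turn walk of Definition~\ref{def:rightmostSP}. We would do this not by walking, but by reformulating it as a minimization with a lexicographic tie-breaker. We cut the plane along a curve from the interior of $h'$ to $v$, giving an ordering of the vertices of $h$. The vertex $r_{v,h,h'}$ is then the leaf of $T_{v,\mid h}$ on $h$ whose root path is extremal, namely right-most, in the cyclic order of tree paths around $v$ as seen from $f$.

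To compute it, we would run the FR-Dijkstra of Theorem~\ref{thm:FRDijkstraStronger} with perturbed weights that favour right turns. Following the standard technique in \cite{klein2005multiple}, we assign an infinitesimal secondary weight to the crossings of the cut curve, or equivalently count the winding of the path with respect to $h'$. Each DDG edge then stores, besides its length, the number of times its underlying path crosses the cut. Shortest paths remain the same because they are unique, but the secondary labels now orient each path relative to $h'$. The leaves on $h$ of $T$ are all $v_1(h),\dots,v_{s_h}(h)$, and the leaf at which the clockwise order of the paths switches sides relative to $f$ is $r_{v,h,h'}$. Given the crossing labels, we find this switch point by a scan over $Q(h)$, which takes $O(|\partial P|)=O(\sqrt{|P|})$ time.

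The main obstacle is correctly determining, from the compressed DDG-level tree, the left/right order of the paths at the branching vertices. Two paths may share a prefix and then diverge inside a subpiece, in which case their relative order is not visible from DDG edges alone. We would handle this by storing, at preprocessing time for each DDG edge $(x,y)$ of each piece, the first and last edge of the underlying shortest path, with the same stored for the paths in the incremental structure via $e^t$ decomposition. With this information the turn order at every branching vertex of $T$ can be compared using the rotation system in $O(1)$ time. If that fails, the fallback is to binary search over $Q(h)$ for the switch point. Each comparison tests whether a candidate path separates $h'$ from a given vertex, which is a single parity-of-crossings check, so the binary search costs $\tilde{O}(1)$ comparisons on top of the FR-Dijkstra run.
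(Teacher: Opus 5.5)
There is a genuine gap, located exactly at what you call the main obstacle. First, your crossing labels are taken with respect to a cut curve from the interior of $h'$ to $v$, so they depend on the query vertex. In the application, $v=v^t$ changes with every update. Such labels therefore cannot be stored on DDG edges during preprocessing. Computing at query time how often the path underlying a DDG edge of a child piece $Q$ crosses the curve requires expanding that path inside $Q$, which costs up to $|Q|$ per edge and destroys the $\tilde{O}(\sqrt{|P|})$ bound. The infinitesimal perturbation does not rescue this: shortest paths are unique, so it changes nothing, and all the content is in the labels. Your fallback binary search relies on the same parity labels and inherits the problem.

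Second, storing the first and last edge of each underlying path does not resolve the case you flag yourself. Two tree edges of $T$ leaving a vertex $x$ into the same child piece $Q$ represent shortest paths in $Q$ from $x$ that may share a prefix. They then have the same first edge, and the rotation system at $x$ cannot order them.

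The missing ingredient, which is what the paper uses, is to precompute for every piece $P'$ and every $u\in\partial P'$ the cyclic order, consistent with the embedding, of the DDG edges of $P'$ leaving $u$. This is done bottom-up. Given the orders of the children, take the FR-Dijkstra tree from $u$ over the children's DDGs. Traverse it by a DFS that always continues with the tree edge making the sharpest right turn with respect to the parent edge (for the root, a dummy edge from inside the hole). The heads in DFS order give the cyclic order, at $O(|T|)$ extra cost on top of FR-Dijkstra. With these orders one simply runs FR-Dijkstra from $v$ and walks the right-most path of Definition~\ref{def:rightmostSP} directly on the DDG-level tree, picking the sharpest right turn at each vertex, in time proportional to the tree.

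Your parity idea can also be made to work as a genuinely different route, with three changes:
\begin{itemize}
\item Replace the $v$-dependent curve by a fixed dual path from $h'$ to $h$, one for each piece and pair of holes.
\item Carry its crossing count as an extra additive label on every DDG edge of every descendant piece. This needs $O(\log n)$ labels per edge, updated via the decomposition $d(b,u^t)+\omega^t+d(v^t,b')$.
\item Identify $f$ as the face bounded by $T[v_i]$, the hole edge from $v_i$ to $v_{i+1}$, and $T[v_{i+1}]$, for the unique $i$ whose closed walk has odd crossing parity. Then read off $r_{v,h,h'}$ as the appropriate endpoint from the fixed orientation of $Q(h)$.
\end{itemize}
That route needs no turn order at all. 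As written, however, the proposal does neither.
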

    
While $\tilde{O}(\sqrt{|P|})$ seems excessive, we only need to do such a computation for each affected piece $P$ and every ordered pair of holes in $h(P)$ once, so we can find all necessary vertices $r_{v^t,h',h}$ necessary in total time $\tilde{O}(\sqrt{|P|})$.
    
Secondly, we require the following tool.

\begin{restatable}{lemma}{tool}\label{lma:binarySearchTool}
Given any two vertices $v_i(h), v_j(h)$ on $h$, such that $v_i(h), v_j(h) \not\in S_{b,h}$, there is an algorithm that in $\tilde{O}(1)$ time detects whether the edge $e^t$ is contained in the region of the plane to the right of the directed cycle consisting of the shortest path $\pi_{b, v_i(h)}$ from $b$ to $v_i(h)$, the path in $h$ from $v_i(h)$ to $v_j(h)$ on $h$, and the reverse of the shortest path $\pi_{b, v_j(h)}$ from $b$ to $v_j(h)$ (see Figure~\ref{fig:r}(b)).
\end{restatable}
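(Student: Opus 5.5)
We reduce the geometric question to a constant number of comparisons of precomputed numbers, so that the whole test runs in $\tilde{O}(1)$ time. The key observation is that $e^t=(u^t,v^t)$ lies to the right of the cycle $C=\pi_{b,v_i(h)}\circ h[v_i(h),v_j(h)]\circ \mathrm{rev}(\pi_{b,v_j(h)})$ if and only if $u^t$ (equivalently $v^t$, or an interior point of $e^t$) lies strictly inside the region bounded by $C$, or on $C$ with $e^t$ leaving into that side.

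Because $v_i(h),v_j(h)\notin S_{b,h}$, the paths $\pi_{b,v_i(h)}$ and $\pi_{b,v_j(h)}$ in $P^{t}$ coincide with those in $P^{t-1}$ and do not use $e^t$. So $C$ is a fixed curve in the common graph $P^{t-1}\setminus\{e^t\}=P^t\setminus\{e^t\}$. We will decide the side of $e^t$ relative to $C$ through the shortest path tree $T_b$ rooted at $b$.

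First step: associate to $e^t$ a single boundary point. Consider the path in $T_b$ (in $P^{t-1}$) from $b$ to $u^t$. By uniqueness of shortest paths and planarity, this path cannot cross $\pi_{b,v_i(h)}$ or $\pi_{b,v_j(h)}$ transversally. Once it separates from them, it stays on one side, because two shortest paths from the same source that cross twice would contradict uniqueness. Hence $u^t$ lies in the region exactly when the branch of $T_b$ containing $u^t$ leaves $\pi_{b,v_i(h)}$ (or $\pi_{b,v_j(h)}$) to the correct side. Equivalently, the leftmost/rightmost extension of $\pi_{b,u^t}$ to $h$ hits $h$ at some vertex inside $[v_i(h),v_j(h)]$.

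To make this computable, during the update we run the Extended FR-Dijkstra (Theorem~\ref{thm:FRDijkstraStronger}) variant from Lemma~\ref{lem:r_compute} once per affected piece and boundary vertex $b$. Its output is, for $u^t$, an index on $h$: a vertex $w_b\in Q(h)$ such that the path $b\to u^t$ followed by the leftmost extension in $T_b$ ends at $w_b$. We store these values. Since we already spend $\tilde{O}(\sqrt{|P|})$ per piece, the plan is to obtain all $w_b$ for $b\in\partial P$ simultaneously using the reversed-tree trick. Concretely, we compute $r$-type vertices for the tree rooted at $u^t$ in the reversed graph, which yields the separating information for all $b$ at once.

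Second step: the test. We output ``inside'' iff $w_b\in[v_i(h),v_j(h)]$, with tie cases handled by comparing cyclic positions in $Q(h)$. Each comparison is $O(1)$ after the stored lookups, and any distances needed are obtained via the $\tilde{O}(1)$ query of the data structure.

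The main obstacle is the first step. We must show that one stored vertex per $(b,h)$ correctly encodes the side of $e^t$ relative to every such cycle $C$, and that these vertices can be computed for all $b$ in $\tilde{O}(\sqrt{|P|})$ total time. We would try to prove the correctness claim by the non-crossing argument from the proof of Lemma~\ref{lma:consecVerticesS}, with the endpoint conditions $v_i(h),v_j(h)\notin S_{b,h}$ guaranteeing that $C$ avoids $e^t$.
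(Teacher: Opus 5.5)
There is a genuine gap. Your first step is sound: since shortest paths from $b$ are unique, the side of $e^t$ is decided by the direction in which $\pi_{b,u^t}$ leaves $\pi_{b,v_i(h)}$ and $\pi_{b,v_j(h)}$. Equivalently, it is decided by which face of $T_{b,\mid h}\cup h$ contains $e^t$. But the content of the lemma is obtaining this information in $\tilde{O}(1)$ time, and that is never supplied.

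\begin{itemize}
\item Your first suggestion runs the extended FR-Dijkstra once per boundary vertex $b$. That costs $\tilde{O}(\sqrt{|P|})$ per $b$, hence $\tilde{O}(|P|)$ per piece and update, which is exactly the from-scratch bound the paper is trying to beat.
\item The ``reversed-tree trick'' does not work as stated. $\overleftarrow{T}_{u^t}$ encodes the paths $\pi_{b,u^t}$ for all $b$. It does not encode the trees $T_b$, i.e.\ the paths $\pi_{b,v_k(h)}$ against which each $\pi_{b,u^t}$ must be positioned. A rightmost-path computation in one tree yields one vertex, not $|\partial P|$ different wedge indices.
\item $T_b$ at time $t-1$ is not stored anywhere, and it changes with earlier updates.
\item Defining $w_b$ via a ``leftmost extension of $\pi_{b,u^t}$ to $h$'' is ill-posed, because the subtree of $u^t$ in $T_b$ need not contain any vertex of $h$. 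You would have to define $w_b$ as the face of $T_{b,\mid h}\cup h$ containing $e^t$.
\item Computing that face for each $b$ by binary search along $h$ requires an $\tilde{O}(1)$ left/right oracle, which is precisely the statement to be proved. So the precomputation only relocates the difficulty.
\end{itemize}

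The missing idea is a way to locate the branching vertex quickly. The paper precomputes nothing per $b$ and answers each test on the fly.
\begin{itemize}
\item It finds the last common vertex $x$ of $\pi_{b,u^t}$ and $\pi_{b,v_i(h)}$ (respectively $\pi_{b,v_j(h)}$), using that $e^t$ lies on neither path. It then reads the turn direction at $x$ from the rotation system.
\item To find $x$, it stores at preprocessing, for every piece $P'$ and every $b'\in\partial P'$, the in- and out-shortest-path trees over the child DDGs. These are contracted to the child boundary vertices and equipped with level-ancestor structures. After each update it builds the same for $T_{v^t}$ and $\overleftarrow{T}_{u^t}$.
\item A binary search over depths in $\overleftarrow{T}'_{v_i(h)}$ and $\overleftarrow{T}'_{u^t}$, starting from $b$, finds the last shared child-boundary vertex. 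If the next vertices on the two paths lie in the same child piece, it recurses into that piece. Over $O(\log n)$ levels this gives $O(\log^2 n)$ time.
\item If $\pi_{b,v_i(h)}$ currently uses an edge $e^\ell$ decreased at an earlier time, the trees stored for $e^\ell$ are used instead.
\end{itemize}
Without a mechanism of this kind, your proposal does not reach the $\tilde{O}(1)$ bound.
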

Before we prove Lemma \ref{lma:binarySearchTool}, let us show how we use these two sub-procedures to compute $S_{b,h}$ efficiently. Here, we can employ a simple binary search strategy. Let $r_{v^t,h',h}=v_k(h)$ in the cyclic ordering. Take two vertices $v_i(h), v_j(h) \in Q(h)$ where $v_i(h)$ is the vertex preceding $r_{v^t,h',h}$ in the cyclic ordering and $v_j(h)=v_{i-\lfloor \frac{s_h-(k-i)}{2} \rfloor}(h)$. Recall that we can test membership of $v_i(h), v_j(h)$ in $S_{b,h}$ in constant time, then 
\begin{itemize}
    \item if both are in $S_{b,h}$: we know that $[v_j(h), v_i(h)] \subseteq S_{b,h}$ and we continue the search in the interval $Q(h)\setminus [v_j(h), v_i(h)]$
    by setting new $v_j(h)=v_{j-\lfloor \frac{s_h-(k-j)}{2} \rfloor}(h)$.
    \item if neither of them are in $S_{b,h}$: use Lemma \ref{lma:binarySearchTool} to decide whether $e^t$ is in the region described in the Lemma. If so, we can conclude that $S_{b,h} \subseteq (v_j(h), v_i(h))$ and
    thus repeat the same process on $(v_j(h), v_i(h))$ by setting new  $v_j(h)=v_{i-\lfloor \frac{i-j}{2} \rfloor}(h)$ and new $v_i(h)=v_{i-1}(h)$; 
    otherwise, we know that $S_{b,h} \cap (v_i(h), v_j(h))=\emptyset$ and thus continue the search on $Q(h)\setminus (v_j(h), v_i(h))$. Here we set the new $v_i(h)$ to be the vertex following $r_{v,h',h}$ in the cyclic ordering and $v_j(h)=v_{j-\lfloor \frac{s_h-(k-j)}{2} \rfloor}(h)$.
    \item The case where exactly one vertex is in $S_{b,h}$ is similar. W.l.o.g assume $v_j(h)\in S_{b,h}$. We continue the search with new $v_i(h)=v_{i-\lfloor \frac{i-j}{2} \rfloor}(h)$ and new $v_j(h)=v_{j-\lfloor \frac{s_h-(k-j)}{2} \rfloor}(h)$.
    
\end{itemize}
Thus it follows that we can do binary search in $\tilde{O}(1)$ time using the above procedures.

\paragraph{Detecting $e^t$ in a Region.} It remains to prove the following Lemma.

\tool* 

\noindent\textit{Overview.} To prove the Lemma, we employ the following algorithm: first let us assume that the first edge on the shortest paths $\pi_{b, v_i(h)}, \pi_{b, v_j(h)}$ and the shortest path $\pi_{b,v^t}$ from $b$ to $v^t$ are all distinct. Here, we can use the planar embedding of $G$ to conclude whether $\pi_{b,v^t}$ uses as a first edge an edge to the right of $\pi_{b, v_i(h)}$ and to the left of $\pi_{b, v_j(h)}$ (or not) and if so we know that $e^t$ is in the region; otherwise $e^t$ must be outside the region.

Unfortunately, the paths $\pi_{b, v_i(h)}$ and $\pi_{b,v^t}$ might share the first few (or many) edges and $\pi_{b,v^t}$ only takes a right-turn after a while (the case where $\pi_{b, v_j(h)}$ and $\pi_{b,v^t}$ share many edges is analogous). We thus have to find the last vertex $x$ that is on both $\pi_{b, v_i(h)}$ and $\pi_{b,v^t}$. Deciding whether $\pi_{b,v^t}$ makes a left-turn after following $\pi_{b, v_j(h)}$ is then again simple using the embedding. Observe that the edge $e^t$ is not on $\pi_{b, v_i(h)}$ and $\pi_{b, v_j(h)}$ so that we can concentrate on finding the last common vertex on the shortest paths of $\pi_{b,u^t}$ and $\pi_{b, v_i(h)}$ and $\pi_{b, v_j(h)}$, respectively. \\

\noindent\textit{Storing Additional Information.} To find the vertex $x$, we need to store additional information. We use the following results.

\begin{theorem}[Level-Ancestor Data Structure, see \cite{bender2004level}]\label{thm:levelAnc}
Given a tree $T$ rooted at vertex $r$, there is a data structure that with preprocessing time $O(|T|)$ and query time $O(1)$ that on query for a vertex $x$ and a positive integer $d$ returns the $d^{th}$ edge on the $r$-to-$x$ path in $T$.
\end{theorem}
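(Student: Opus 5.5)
We reduce the statement to level-ancestor queries. Let $\mathrm{anc}(x,k)$ denote the ancestor of $x$ at depth $k$, where $\mathrm{depth}(r)=0$. Then the $d^{th}$ edge on the $r$-to-$x$ path is the edge from $\mathrm{anc}(x,d-1)$ to $\mathrm{anc}(x,d)$, i.e.\ the parent edge of $\mathrm{anc}(x,d)$. So after one DFS that stores depths and parent edges, it suffices to answer $\mathrm{anc}(x,k)$ queries in $O(1)$ time with $O(|T|)$ preprocessing, where $n=|T|$. I would follow the ladder-plus-jump-pointer approach of \cite{bender2004level}.

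\textbf{Step 1 (ladders).} Compute a long-path decomposition of $T$: repeatedly take a longest root-to-leaf path in the remaining forest. Store each path of length $\ell$ as an array, and extend it upward by up to $\ell$ further ancestors; this extended array is its \emph{ladder}. The extensions at most double the total size, so all ladders together take $O(n)$ space. A vertex whose subtree has height $h$ lies on a path of length at least $h$. Hence its ladder contains all of its ancestors within distance $h$, and any such ancestor is found by one array lookup.

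\textbf{Step 2 (jump pointers).} Storing pointers to the $2^i$-th ancestor for every vertex would give $O(1)$ queries. Indeed, to find the ancestor $k$ levels up, jump $2^{\lfloor \lg k\rfloor}$ levels to a vertex $y$. The subtree of $y$ has height at least $2^{\lfloor \lg k\rfloor}$, so the ladder of $y$ covers the remaining fewer than $2^{\lfloor \lg k\rfloor}$ levels. This costs $O(n\log n)$ preprocessing, which is too much.

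\textbf{Step 3 (macro-micro split; main obstacle).} The main difficulty is getting preprocessing down to linear, and I would handle it as follows. Call a vertex a \emph{jump node} if its subtree has at least $\frac14\lg n$ vertices but none of its children's subtrees does. The subtrees of jump nodes are disjoint, so there are $O(n/\lg n)$ jump nodes, and only these receive full jump-pointer tables, for $O(n)$ total. Every other vertex with subtree size at least $\frac14\lg n$ (a macro vertex) stores a pointer to some jump-node descendant $j$. A query from such a vertex $x$ with distance $k$ becomes a query from $j$ with distance $k+\mathrm{depth}(j)-\mathrm{depth}(x)$. The Step 2 argument still applies, because the subtree height of the vertex reached by the jump is at least the jump length.

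The remaining vertices form micro trees hanging below macro vertices, each with fewer than $\frac14\lg n$ vertices. Encode each micro tree's shape by its balanced-parenthesis string. There are only $O(\sqrt n)$ such shapes, so a lookup table indexed by (shape, node index, distance) costs $O(\sqrt n\,\mathrm{polylog}\,n)=o(n)$ to build. A query inside a micro tree is answered by table lookup. If the answer lies above the micro tree's root, we move to the root's macro parent and query from there.

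Together these give $O(n)$ preprocessing and $O(1)$ query time, which proves the statement.
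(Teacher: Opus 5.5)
Your proposal is correct and is essentially the ladder, jump-pointer and macro--micro construction of Bender and Farach-Colton, which the paper simply cites without giving a proof of its own. One detail should be made explicit: each jump node's $O(\log n)$ jump pointers must be computed in $O(1)$ time apiece, since the doubling recurrence from Step 2 needs pointers at non-jump nodes. You can do this by reading them off an array that holds the current root-to-node path during a DFS, or by taking the $2^i$-th ancestor from the ladder of the $2^{i-1}$-th ancestor, whose height is at least $2^{i-1}$.
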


We assume that on initialization of our data structure, additionally to the DDG edges, we also store for each piece $P'$, and vertex $b' \in \partial P'$ the shortest path out-tree $T_{b'}$ over the lower level DDGs of $P'$, as can be computed by using Appendix \ref{sec:ExtendFRDijkstra}. Additionally, we know that $T_{b'}$ spans all boundary nodes of $P'$'s child pieces, denoted $\partial_{child} P'$. We mark these vertices in the tree and let $T'_{b'}$ be the tree derived from restricting the vertex set to $\partial_{child} P'$ (by recursively removing leaves that are not in $\partial_{child} P'$ and by contracting paths between two vertices in $\partial_{child} P'$). 

Then for each such tree, we initialize a level-ancestor data structure on $T'_{b'}$ as described in \Cref{thm:levelAnc}. We do the same for the in-tree $\overleftarrow{T}_{b'}$. It is not hard to observe that these computations can be done in time $O(|T_{b'}| + |\overleftarrow{T}_{b'}|) = \tilde{O}(\sqrt{|P|})$ time, thus the pre-processing can still be bound by $\tilde{O}(r)$.

After each update $(e^t, \omega^t)$, we do the same to the trees $T_{v^t}$ rooted at $v^t$ and the tree $\overleftarrow{T}_{u^t}$ rooted at $u^t$ computed on dense distance graphs with edge reversed. Again, this incurs at most $\tilde{O}(\sqrt{r})$ update time over all levels for each update. \\

\noindent\textit{Implementing the Query Efficiently.} Now, returning to the paths $\pi_{b, v_i(h)}$ and $\pi_{b,u^t}$, we distinguish two cases:
\begin{itemize}
    \item if no edge on $\pi_{b, v_i(h)}$ had its weight decreased since the data structure was initialized: then we use the trees $\overleftarrow{T}'_{v_i(h)}$ and $\overleftarrow{T}'_{u^t}$, starting at vertex $b$, we use binary search on the parameter $d$ in the level ancestor data structures to detect which boundary vertex $b'$ in $\partial_{child} P$ is the last on the shortest paths $\pi_{b, v_i(h)}$ and $\pi_{b,u^t}$ (from $b$ we use a reasonable $d$ for the paths to $v_i(h)$ and $u^t$ respectively and if the edges agree we search the interval containing values larger than $d$ and otherwise, we search the interval containing values smaller than $d$; we repeat this step iteratively to find the last edge that is shared on the paths by recursing on the relevant subintervals; this takes at most $O(\log n)$ time).
    
    Given $b'$, let $b''$ be the next vertex on $\overleftarrow{T}'_{v_i(h)}[b]$ and $b'''$ the next vertex on  $\overleftarrow{T}'_{u^t}[b]$. If $b''$ and $b'''$ are in different child pieces of $P$, then we can conclude that $b'$ is equal $x$. 
    
    Otherwise, we know that all three vertices are incident to the child piece $P''$. The final problem becomes to find the last common vertex of the shortest paths $\pi_{b', b''}$ from $b'$ to $b''$ and $\pi_{b', b'''}$ from $b'$ to $b'''$. 
    
    It is not hard to see that this problem can now be solved by recursing on the piece $P''$. Each binary search for the last child boundary vertex takes $O(\log n)$ time and the recursion depth is bounded by $O(\log n)$ giving a total query time of $O(\log^2 n)$.
    
    \item if some edge $e^{\ell}$ is the edge on $\pi_{b, v_i(h)}$ that had last its edge weight decreased: we can then use the same strategy as suggested above on the trees $T'_{v^{\ell}}$ and $\overleftarrow{T}'_{u^{\ell}}$ and the edge $e^{\ell}$ separately. While this increases the time to find the last child boundary vertex shared by the paths slightly, we still only require a single recursive call and therefore the query time is again $O(\log^2 n)$. 
\end{itemize}
This compeletes the description of our data structure and establishes Theorem \ref{Thm:IncDDG}.

\paragraph{Making the Data Structure Fully-Persistent.} Lastly, to turn our data structure into a fully-persistent data structure, we use the following result by \cite{driscoll1989making}.

\begin{definition}[Emphemeral Linked Data Structure]
An \emph{emphemeral linked data structure} is a data structure that consists of a finite number of \emph{nodes} where each node has a constant number of \emph{words} containing either information or a pointer to another node.  
\end{definition}
\begin{theorem}[\cite{driscoll1989making}]
Any emphemeral linked data structure, where each node has its pointer stored at a constant number of nodes at most, can be made fully-persistent at the cost of an additional \emph{factor} of $O(\log n)$ to the amortized/ worst-case update/ query cost.
\end{theorem}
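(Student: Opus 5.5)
I would prove this with the \emph{fat node} method of \cite{driscoll1989making}. It only needs a logarithmic overhead, which is all the statement asks for; the bounded in-degree assumption is what the node-copying method needs to reach $O(1)$ amortized, and we will not need it here. Every version created so far is a node of a \emph{version tree}: a new version $v'$ is obtained by modifying an existing version $v$ and becomes a child of $v$. We linearize the version tree by its preorder and keep this linear order in an order-maintenance structure (Dietz--Sleator), which supports insertion after a given element and comparison of two elements in $O(1)$ worst-case time. The invariant is that the set of versions in the subtree of any $v$ forms a contiguous interval of this list. Concretely, every version $v$ gets two list elements $v^{-}$ and $v^{+}$, bracketing exactly the elements of $v$'s descendants.

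Each node of the ephemeral structure is replaced by a fat node. For each of its $O(1)$ words, the fat node keeps a balanced search tree with worst-case $O(\log n)$ operations (e.g.\ a red--black tree). This tree stores the pairs $(\text{list element},\text{value})$ recording every write to that word, keyed by the order-maintenance order.

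Reading a word in version $v$ is a predecessor search for $v^{-}$ among the stored keys, and it returns the value of the latest write at or before $v^{-}$ in list order. Following a pointer just yields a fat node, and the version $v$ is carried along the traversal.

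An update derived from version $v$ creates $v'$ by inserting $v'^{-},v'^{+}$ immediately after $v^{-}$. For each word written in $v'$ with new value $y$, we first read its value $x$ as seen at $v$. We then insert the pair $(v'^{-},y)$ and the pair $(v'^{+},x)$. The second pair restores the old value for all list positions after $v'$'s interval, so that versions outside the subtree of $v'$ still see the value they saw before.

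Correctness is the step I expect to need the most care. By induction on the number of versions I will show that, for every version $w$ and every word, the predecessor of $w^{-}$ carries exactly the value that word had in $w$ when $w$ was created. A write at $v'$ affects only keys inside $[v'^{-},v'^{+}]$, and these are exactly $v'$ and its future descendants. Future descendants inherit $v'$'s value until they overwrite it themselves, and all keys outside the interval are unaffected because of the restoring pair $(v'^{+},x)$. New versions inserted later inside an interval must see the correct inherited value. This holds because they are placed directly after their parent's $^{-}$ element, so their predecessor key is their parent's most recent relevant write.

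Finally, the cost. Each ephemeral read or write of a word becomes $O(1)$ order-maintenance operations plus $O(1)$ search-tree operations, each taking $O(\log N)$ worst-case, where $N$ is the total number of writes. Since $N$ is polynomial in $n$, this is $O(\log n)$. Hence any amortized or worst-case bound of the ephemeral structure is multiplied by at most $O(\log n)$, as claimed.
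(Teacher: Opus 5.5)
Your fat-node construction is correct, and it is the standard way to obtain a logarithmic overhead. The paper gives no proof of its own and only cites Driscoll et al. The bracket invariant, the restoring pair $(v'^{+},x)$, and your induction over versions all go through. The one detail to add is that a repeated write to the same word within one update should overwrite the value at key $v'^{-}$ rather than insert a second pair.

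This is not quite the route the statement's hypothesis points to. Bounded in-degree is what Driscoll et al.'s node-splitting method needs, and node splitting gives $O(1)$ overhead, but amortized for update steps. Your route needs no in-degree bound, so the paper's trick of replacing shared pointers by binary trees becomes unnecessary. With worst-case order maintenance and worst-case balanced trees, it also gives a true worst-case $O(\log n)$ overhead per ephemeral step, which is what the paper's de-amortization in the offline reduction relies on.

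The step that does not go through is your last sentence, as far as amortized bounds are concerned. Your simulation shows that a persistent operation costs $O(\log N)$ times the \emph{actual} ephemeral cost of that operation, executed on the version it is applied to. A worst-case ephemeral bound therefore transfers, but an amortized one does not. An amortized bound only controls the total cost along a single linear history, whereas under full persistence the same version may be updated arbitrarily often.

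For example, suppose the ephemeral structure rebuilds itself from scratch every $s$ updates; this has amortized cost $O(1)$ and can be realized with bounded in-degree. You can apply the rebuild-triggering update $s$ times to the version just before a rebuild. Each application costs $\Omega(s)$, so about $2s$ updates cost $\Omega(s^2)$ in total, far more than $O(s\log n)$.

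So the conclusion should be restricted to worst-case bounds; amortized bounds do transfer under partial persistence, where the history is linear. The worst-case version is all the paper needs, since the incremental DDG structure it makes persistent has worst-case guarantees.
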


While our described data structure is not an emphemeral linked data structure yet, we can convert multiple pointers pointing to the same node by binary trees over them which increases update times, query times and space by at most a logarithmic factor.

\paragraph{Acknowledgements.} The authors would like the anonymous reviewers of this article for their useful feedback and suggestions.



\bibliography{main}

\appendix

\section{Construction of the Last Interval Detection Data Structure}

Let us now show how the data structure presented in Theorem \ref{thm:lastInterval} can be constructed. We restate the theorem here for convenience.

\lastIntervalDet*

Assume that the length of the update sequence is bounded by $n^c$ for some (possibly large) constant $c$. Our algorithm maintains at all times an incomplete binary balanced tree $\mathcal{T}$ with $\lceil c \cdot \log_2(n) \rceil$ levels, where the leaf nodes are the intervals that where already added in the order that they arrived in. More precisely, after time $0$, $\mathcal{T}$ is a path of length $\lceil c \cdot \log_2(n) \rceil$ with the interval $[l_0, r_0] = [1,U]$ as its only leaf. Every time a new interval $[l_i, r_i]$ is added, we add it as a leaf to the right of interval $[l_{i-1}, r_{i-1}]$ into $\mathcal{T}$ possibly creating up to $O(\log(n))$ new internal nodes.

Now, additionally to maintaining the tree $\mathcal{T}$, we maintain at each internal node $v$ of $\mathcal{T}$, a dynamic interval tree (see for example \cite{cormen2009introduction}) which holds the intervals of the leaf nodes in the sub-tree rooted at the node. An interval tree has $O(\log n)$ update time and allows to query for any element $x \in [1, U]$ whether there is a segment in the interval tree that contains $x$ in time $O(\log n)$. An empty interval tree can be initialized in time $O(1)$.

It is not hard to see from this discussion, that on adding an interval $[l_i, r_i]$ to our tree, we can maintain the data structure by creating all nodes on the leaf-to-root path from $[l_i, r_i]$ to the root that are missing in $\mathcal{T}$, initializing a new interval tree at them, and finally adding $[l_i, r_i]$ to the interval tree of every node on the leaf-to-root path. Thus an update take worst-case update time $O(\log^2 n)$ time.

For a query with input $x$, we start from the root of $\mathcal{T}$, then query whether $x$ is in some interval in the subtree rooted at the right child of the root. If so, we move to the right child, otherwise, we move to the left child. We repeat this procedure on the node that we moved to and do so iteratively until we navigated to a leaf. It is not hard to see that the reached leaf node is exactly the last interval that contains $x$. Since we move along a path of at most $O(\log n)$ nodes and deciding whether to go to the left or right can be decided in $O(\log n)$ time using the interval trees, we have total query time $O(\log^2 n)$. 

This completes the proof.

\section{Extending FR-Dijkstra}\label{sec:ExtendFRDijkstra}
\subsection{Finding leftmost to rightmost shortest paths}\label{subsec:FRDijkstraLeftRight}
In this section, we describe an extension to FR-Dijkstra that allows it to obtain a vertex $r_{v,h,h'}$, given a vertex v in a piece with holes $h$ and $h'$. We will describe an extension that will not slow down FR-Dijkstra. Recall that $r_{v,h,h'}$ is found by growing a rightmost shortest path from $v$, taking the sharpest right turn at each step. After having executed FR-Dijksta to find a shortest path tree, the rightmost shortest path now is a sequence of DDG edges from the tree and we want to traverse these edges in order. At each step of the traversal, we need the cyclic ordering of outgoing tree edges that is consistent with the cyclic ordering in the underlying planar graph; this will allow us to pick the tree edge making the sharpest right turn. It thus suffices to give a preprocessing step which for each piece $P$ and each boundary vertex $u$ of $P$ computes and stores a cyclic ordering of DDG edges of $P$ outgoing from $u$. With this precomputed information, $r_{v,h,h'}$ can be found in time proportional to the shortest path tree found by FR-Dijkstra. Hence, the extension to FR-Dijkstra will not increase the overall time bound.

We may use FR-Dijkstra to precompute DDGs of all pieces bottom-up in the recursive decomposition where leaf pieces have constant size. Our preprocessing step processes pieces in the same bottom-up order. Let $P$ be a non-leaf piece $P$ and suppose that cyclic orderings have been found for all boundary vertices of all child pieces. Let $u$ be a boundary vertex of $P$. We will describe the preprocessing for $u$ that gives a cyclic ordering of DDG edges of $P$ outgoing from $u$. We let $h$ denote a hole of $P$ containing $u$.

Let $T$ be the shortest path tree from $u$ in $P$ found by FR-Dijkstra. This tree consists of DDG edges from child pieces of $P$. Given the precomputed cyclic orderings of these pieces, it is now fairly easy to find the ordering for $u$ in $P$. First, perform a DFS-traversal of $T$ where the next tree edge to be visited is the one that makes the sharpest right turn w.r.t.~the parent tree edge; in case the parent tree edge does not exist in $T$, we regard it as a dummy edge from a vertex embedded inside $h$ to $u$. Given the precomputed information for child pieces, the DFS-traversal of $T$ can be done in $O(|T|)$ time.  The cyclic ordering of DDG edges of $P$ from $u$ is now the same as the DFS ordering of the heads of these edges.

This completes the description of the preprocessing step and we argued for its correctness. The running time is dominated by the time to compute DDGs of all pieces using FR-Dijkstra. Hence, the preprocessing step will not increase the overall time bound.



\subsection{Dealing with distinct holes}\label{sec:FRDijkstraHoles}
In this subsection, we give some details missing in the paper by Fakcharoenphol and Rao on how to deal with distinct holes. These details are important in our setting since we are not treating FR-Dijkstra as a black box.

Let $h_1$ and $h_2$ be two \emph{distinct} holes of a piece $P$ and consider a setting where FR-Dijkstra is run on some union of DDGs one of which is $\DDG{P}$. The issue with relaxing edges from $h_1$ to $h_2$ is that the Monge property no longer holds since $h_1$ and $h_2$ are distinct faces. We show how to extend FR-Dijkstra to handle this.

We can view the vertices of $h_2$ as an interval $R$ defined by starting in some arbitrary vertex and then following the cyclic ordering of $h_2$ until all vertices have been visited. Let $L$ be the vertices of $h_1$ with the cyclic ordering from $h_1$. We wish to efficiently relax edges of the bipartite graph of DDG edges from $L$ to $R$.

Consider a step of FR-Dijkstra where a vertex $u\in L$ is activated. We need to argue how to efficiently find the vertices of $R$ that get $u$ as their new parent. Clearly, these vertices induce a subpath of $h_2$ and hence at most two sub-intervals of $R$. Using the same procedure as in the standard FR-Dijkstra procedure, we traverse $L$ in order from $u$ until a vertex $u'$ is found whose interval in $R$ is partially (but not fully) removed due to the activation of $u$.

FR-Dijkstra would now do binary search to determine which part of the interval of $u'$ should belong to the interval of $u$. However, this fails since we are missing the following property: if two vertices $v_1$ and $v_2$ in $R$ have $u$ as their new parent then every vertex between them in $R$ also has $u$ as a parent.

Let $r$ be the endpoint of the rightmost DDG edge of $u$ in $P$. 
Split $R$ into two sub-intervals that share the endpoint $r$. Then we can in fact ensure the property above by doing binary search in each of the two sub-intervals. This follows using similar arguments to those in the proof of Lemma~\ref{lma:consecVerticesS}. Since $r$ can be obtained in constant time using the auxiliary data computed in the extension to FR-Dijkstra in Section~\ref{subsec:FRDijkstraLeftRight}, we can thus deal with distinct holes without an asymptotic increase in the running time of FR-Dijkstra.

\section{Reduction from Dynamic DDG to Incremental DDG in the Offline Setting}
\label{sec:reductionOffline}
In this section, we give a simple reduction from the problem of maintaining a DDG on a dynamic graph to maintaining it on an incremental graph. For concreteness, we tailored the reduction to the guarantees of our incremental data structure but it is straight-forward to generalize the reduction further. From the reduction below, combined with Theorem \ref{Thm:IncDDG}, we immediately obtain our main Theorem \ref{thm:mainDDG} as a corollary.

\begin{theorem}
Given an offline data structure $\mathcal{I}$, that for any graph $H$, an $r$-capped recursive decomposition $\mathcal{R}_H$ of $H$ with roots $H_1, H_2, \dots, H_k$, and any $i$, maintains the DDGs of all nodes in the tree rooted at $H_i$, with preprocessing time $\tilde{O}(r)$, handles each edge weight decrease affecting $H_i$ in worst-case update time $\tilde{O}(\sqrt{r})$ and has query time $\tilde{O}(1)$.

Then there exists an \emph{offline} data structure $\mathcal{F}$, that given dynamic graph $G$ along with an $r$-capped recursive decomposition $\mathcal{R}$ with roots $G_1, G_2, \dots, G_{k'}$, for all $j$ maintains the DDGs of all pieces in the tree rooted at $G_j$ with pre-processing time $\tilde{O}(r)$ and amortized update time $\tilde{O}(\sqrt{r})$ and query time $\tilde{O}(1)$. 

If $\mathcal{I}$ is fully-persistent, we can turn the update and query time guarantees of $\mathcal{F}$ into worst-case guarantees.
\end{theorem}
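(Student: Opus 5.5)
The plan is a divide-and-conquer over time on the offline update sequence, so that every edge weight seen by an instance of $\mathcal{I}$ only ever decreases. Fix a root $G_j$ and let its update sequence (restricted to edges of $G_j$) be $(e^1,\omega^1),\dots,(e^T,\omega^T)$. For each edge $e$ and time $t$, let $w_t(e)$ be its weight after $t$ updates. For a time interval $I=[a,b]$, define the graph $G_j^{I}$ with weights $w_I(e)=\max_{t\in I} w_t(e)$. I will build a balanced binary segment tree over $[0,T]$. At a node with interval $I$ and children $I_1,I_2$, the child graph $G_j^{I_1}$ is obtained from $G_j^I$ by weight decreases only, since $w_{I_1}\le w_I$ pointwise. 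Moreover, it differs from $G_j^I$ only on edges updated during $I_2$, and the number of these is at most $|I_2|$.

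The first step is to compute, offline, the breakpoints $\max_{t\in I}w_t(e)$ for all tree nodes and all edges that change. A sweep per edge over its own updates with range-maximum gives this in $\tilde O(T)$ total time. The second step starts one persistent copy of $\mathcal{I}$ on $G_j^{[0,T]}$, with preprocessing $\tilde O(r)$. Traversing the segment tree, a node $I$ is obtained from the version at its parent by issuing the $O(|I_{\text{sibling}}|)$ decreases. By full persistence each child branches from the parent's version independently, so no rollback is needed. Every edge update is charged once per level, giving $O(T\log T)$ decrease operations in total, each costing $\tilde O(\sqrt r)$. This is amortized $\tilde O(\sqrt r)$ per update. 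The leaf for $[t,t]$ is exactly $G_j$ after $t$ updates, so a query at current time $t$ goes to the leaf version and costs $\tilde O(1)$ (with a $\log$ factor for persistence).

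Without persistence I would instead do the DFS with undo, rebuilding a fresh instance for each node from its parent's weights. That still works amortized if I charge the $\tilde O(r)$ preprocessing only at a root, and I treat undo by re-running (rebuilding) from the parent's stored checkpoint. I expect this accounting to be delicate, and it is the main obstacle. The clean route is persistence, where branching is free.

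For worst-case bounds, I will deamortize by processing the segment tree in time order. When the online clock reaches time $t$, the leaf $[t,t]$ must already be built. A node $I=[a,b]$ is built during the time interval preceding it, namely the left sibling's span of length comparable to $|I|$. Its $O(|I|)$ decreases are spread evenly across those steps, each level contributing $\tilde O(\sqrt r)$ per step, so over $O(\log T)$ levels the cost is $\tilde O(\sqrt r)$ worst case. Persistence guarantees that building future versions does not disturb the current version being queried. The leftmost nodes, which have no preceding span, are built in preprocessing at cost $\tilde O(r)$ plus $O(\log T)$ chains of decreases. To keep preprocessing at $\tilde O(r)$, I would process the sequence in epochs of length about $\sqrt r$, or restrict the work to a prefix, building each epoch's tree during the previous epoch. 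Verifying that this scheduling meets the deadlines is the step I would check most carefully.
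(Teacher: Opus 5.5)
Your construction is essentially the paper's: max-weights over dyadic time blocks, at most $|I_{\text{sibling}}|$ decreases from a parent block to a child block, each update charged once per level, epochs of length about $\sqrt r$ to keep preprocessing at $\tilde O(r)$, and persistence for the worst-case version.

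\textbf{The gap.} It is in the first claim of the theorem, which asserts the amortized bounds for an \emph{arbitrary} $\mathcal{I}$; full persistence is assumed only for the worst-case part. Your amortized argument branches via persistence. The non-persistent fallback you sketch does not meet the budget. Creating a fresh instance for a segment-tree node, or restoring one from a stored checkpoint, costs $\Omega(r)$ per node, since an instance is at least as large as the piece it holds. With $\Theta(T)$ nodes this is $\tilde O(Tr)$ in total, i.e.\ $\tilde O(r)$ amortized instead of $\tilde O(\sqrt r)$. Replaying decreases from the root each time is even worse.

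\textbf{The missing idea.} You need cheap rollback, and the worst-case update bound gives it to you for free. A decrease that runs in worst-case time $\tilde O(\sqrt r)$ performs at most that many memory writes. Log each (address, old value) pair, and undo the decrease in the same time by restoring the log in reverse order.

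With undo, your segment-tree DFS can run in time order on a single instance, with no branching:
\begin{itemize}
\item On the $t$-th update, let $2^{i_{\max}}$ be the largest power of two dividing $t$.
\item Roll back $\Delta^{t-1}_0,\Delta^{t-1}_1,\dots,\Delta^{t-1}_{i_{\max}}$, last applied first undone. This leaves the instance at $G^{t-1}_{i_{\max}+1}=G^{t}_{i_{\max}+1}$.
\item Apply $\Delta^{t}_{i_{\max}},\dots,\Delta^{t}_0$ to reach $G^t$.
\end{itemize}
Level $i$ is undone and reapplied once every $2^i$ steps at cost $\tilde O(2^i\sqrt r)$. Summing over the $O(\log T)$ levels gives $\tilde O(\sqrt r)$ amortized per update. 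This is exactly the paper's \textsc{AdvanceToNextVersion}, and it removes the ``delicate accounting'' you were worried about.

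The epoch partition is already needed at this amortized stage, not only for de-amortization. The initial descent to the first leaf alone costs $\tilde O(T\sqrt r)$.

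\textbf{Your de-amortization schedule.} The step you flagged is sound once phrased as follows: for each right child $J$, during its left sibling's span, build $J$ together with its entire leftmost chain down to its first leaf. That is $O(|J|)$ decreases in total. At a fixed level these spans are disjoint, so each step does $\tilde O(\sqrt r)$ work per level and every leaf is ready by its deadline.
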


We prove the theorem for the root nodes $G_1, G_2, \ldots, G_{k'}$. This is without loss of generality since the children of these nodes form an $r/2$-capped recursive decomposition and the data structure $\mathcal{I}$ can be applied to them and so on recursively, all at the cost of an additional $\tilde{O}(1)$ factor.

In the end of this section, we argue that standard de-amortization techniques can be used to achieve worst-case update and query time. Let us now state additional preliminaries and components required by our dynamic data structure.

\paragraph{Additional Notation.} We let $\tau$ be the number of updates that $G$ is undergoing and assume $\tau$ is a power of $2$ and be polynomially bounded in $n$. We denote by $G^t = (V, E, w^t)$ the graph $G$ \emph{after} the $t^{th}$ update is applied, and call each $G^t$ a \emph{version} of $G$. In particular, $G^0$ is the initial graph $G$. Further, we define $\lg x = \log_2 x$, $[x]$ to be shorthand for $\{0, 1, \dots, x-1\}$ and $\lfloor x \rfloor_i$ to be shorthand for $\lfloor x/2^i \rfloor \cdot 2^i$, i.e. $x$ rounded down to a power of $2^i$. 

\paragraph{A Graph Hierarchy.} We define a graph $G^t_i = (V,E,\overline{w}^t_i)$ for every $0 \leq i \leq \lg \tau$ and time $t$ to have the weight function defined for every $e \in E$ by 
$\overline{w}^t_i(e) = \max_{j \in [2^i]} w^{\lfloor t \rfloor_i + j}(e)$. We make the following straight-forward but nonetheless interesting observations.

\begin{observation}\label{obs:timeInvariant}
For every $0 \leq i \leq \lg \tau$ and time $t$ being a multiple of $2^i$, then for any $j \in [2^i]$, $G^t_{i+1} = G^{t+j}_{i+1}$. 
\end{observation}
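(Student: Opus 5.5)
We need to prove: for $t$ a multiple of $2^i$ and $j\in[2^i]$, $G^t_{i+1}=G^{t+j}_{i+1}$. Both graphs have vertex set $V$ and edge set $E$, so the claim reduces to showing that the two weight functions coincide: $\overline{w}^t_{i+1}(e)=\overline{w}^{t+j}_{i+1}(e)$ for every $e\in E$. By definition, $\overline{w}^s_{i+1}(e)=\max_{k\in[2^{i+1}]} w^{\lfloor s\rfloor_{i+1}+k}(e)$. This weight depends on $s$ only through $\lfloor s\rfloor_{i+1}$. It therefore suffices to show $\lfloor t\rfloor_{i+1}=\lfloor t+j\rfloor_{i+1}$.

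Here the plan runs into a subtlety. Under the literal hypothesis ($t$ a multiple of $2^i$), the rounding identity can fail. For example, take $i=0$, $t=1$, $j=0$; this is trivial. But take $i=1$, $t=2$, $j=1$: then $\lfloor 2\rfloor_2=0$ while $\lfloor 3\rfloor_2=0$, which is fine. Now take $t=2^i$ and $j=2^i-1$: we get $t+j=2^{i+1}-1$, and $\lfloor t+j\rfloor_{i+1}=0=\lfloor t\rfloor_{i+1}$, which is fine again. In fact, if $t=m2^i$, then $t+j\in[m2^i,(m+1)2^i)$. This half-open interval lies inside the block $[\lfloor m/2\rfloor 2^{i+1},(\lfloor m/2\rfloor+1)2^{i+1})$, because $(m+1)2^i\le(\lfloor m/2\rfloor+1)2^{i+1}$ whenever $m+1\le 2\lfloor m/2\rfloor+2$. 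That inequality always holds. Also $m2^i\ge \lfloor m/2\rfloor 2^{i+1}$. Hence $\lfloor t+j\rfloor_{i+1}=\lfloor m/2\rfloor 2^{i+1}=\lfloor t\rfloor_{i+1}$.

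So the key steps, in order, are:
\begin{enumerate}
\item Write $t=m2^i$ with $m$ an integer.
\item Show that $t$ and $t+j$ lie in the same length-$2^{i+1}$ dyadic block. This is the elementary inequality $2\lfloor m/2\rfloor\le m<m+1\le 2\lfloor m/2\rfloor+2$, multiplied through by $2^i$.
\item Conclude $\lfloor t\rfloor_{i+1}=\lfloor t+j\rfloor_{i+1}$.
\item Substitute into the definition of $\overline{w}$ to get equal weight functions, hence equal graphs.
\end{enumerate}

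The only step needing care is the block-containment inequality in step 2. The point is that a block of length $2^i$ starting at a multiple of $2^i$ never straddles a multiple of $2^{i+1}$ in its interior, since $2^{i+1}$ is itself a multiple of $2^i$. The same argument gives, more generally, $G^t_{i'}=G^{t+j}_{i'}$ for all $i'\ge i$, which is presumably how the observation is used later.
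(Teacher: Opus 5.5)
Your argument is correct and is the intended one. The paper writes no proof, calling the observation straightforward, and the whole content is what you show: $\overline{w}^{s}_{i+1}$ depends on $s$ only through $\lfloor s\rfloor_{i+1}$, and when $2^i \mid t$ the range $[t,t+2^i)$ lies inside a single dyadic block of length $2^{i+1}$.

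The paragraph announcing a ``subtlety'' is a false alarm. None of your examples fails, and your own step~2 shows the identity always holds, so you should simply delete that paragraph.
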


\begin{observation}\label{obs:fewWeightIncreases}
For every $0 \leq i < \lg \tau$ and time $t$, we have that $G^{t}_i$ differs from $G^{t}_{i+1}$ by at most $2^i$ edge weight decreases. We denote these edge weight decreases by update sequence $\Delta_i^t$.
\end{observation}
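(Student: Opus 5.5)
The plan is to unfold the definitions directly. Write $t_0 = \lfloor t \rfloor_{i+1}$, so that $\overline{w}^t_{i+1}(e) = \max_{j \in [2^{i+1}]} w^{t_0+j}(e)$ and $\overline{w}^t_i(e) = \max_{j \in [2^i]} w^{\lfloor t \rfloor_i + j}(e)$. Since $2^i$ divides $2^{i+1}$, we have $t_0 \le \lfloor t \rfloor_i$ and $\lfloor t \rfloor_i + 2^i \le t_0 + 2^{i+1}$. Hence the window of versions $\{\lfloor t \rfloor_i, \dots, \lfloor t \rfloor_i + 2^i - 1\}$ defining $G^t_i$ is contained in the window $\{t_0, \dots, t_0 + 2^{i+1}-1\}$ defining $G^t_{i+1}$. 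A maximum over a subset is at most the maximum over the superset, so $\overline{w}^t_i(e) \le \overline{w}^t_{i+1}(e)$ for every edge $e$. In other words, going from $G^t_{i+1}$ to $G^t_i$ only decreases edge weights, and we can define $\Delta^t_i$ to be the sequence of decreases setting each edge $e$ with $\overline{w}^t_i(e) < \overline{w}^t_{i+1}(e)$ to its new value $\overline{w}^t_i(e)$. Each such step is a single weight decrease, as required for feeding to the incremental structure.

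It remains to bound the number of such edges by $2^i$. The key point is that every edge untouched by updates inside the big window has equal weights in both graphs. One update occurs between each pair of consecutive versions, so the versions $w^{t_0}, \dots, w^{t_0+2^{i+1}-1}$ are connected by $2^{i+1}-1$ updates. An edge not touched by any of these has a constant weight across the window, so both maxima coincide. This crude count gives about $2^{i+1}$, which is not sharp enough.

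To get $2^i$, I use that the small window is exactly one of the two halves of the big window. If $\overline{w}^t_i(e) < \overline{w}^t_{i+1}(e)$, then the maximum over the big window is attained in the other half, at a strictly larger value than any weight in our half. So $e$ must change weight between the two halves, or within the other half. Either way, $e$ is touched by one of the updates in the other half together with the single boundary update between the halves. There are exactly $2^i$ such updates: the $2^i - 1$ inside the other half plus the one crossing between them. Each update changes one edge, so at most $2^i$ edges differ.

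The only delicate step is this off-by-one accounting, and in particular the convention for which update produces version $w^{s}$ from $w^{s-1}$. If the bookkeeping came out as $2^i+1$, the asymptotic use in the reduction would be unaffected, but I expect the half-window argument above to give exactly $2^i$.
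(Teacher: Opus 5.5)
Your proof is correct and is the natural argument the paper leaves implicit, since it states the observation without proof: the window defining $G^t_i$ is one half of the window defining $G^t_{i+1}$, which gives $\overline{w}^t_i \le \overline{w}^t_{i+1}$ pointwise, and any edge with a strict decrease must be touched by one of the $2^i$ updates separating the other half from the adjacent endpoint of your half. Your off-by-one concern resolves cleanly: with $t_0=\lfloor t\rfloor_{i+1}$, these are exactly updates $t_0+2^i,\dots,t_0+2^{i+1}-1$ when your window is the first half, and updates $t_0+1,\dots,t_0+2^i$ when it is the second half, so the bound is exactly $2^i$.
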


For convenience, we also define $\Delta_i^t$ for negative $t$, to simply be the empty set.

\paragraph{A Top-Down Algorithm to Maintain the Graph Hierarchy Efficiently.} We have now defined all concepts necessary and are ready to state the algorithm. For initialization, at time $0$, we initialize an incremental data structure $\mathcal{I}$ on $G^0_{\lg \tau}$ and then apply in the following order the updates $\Delta^0_{\lg(\tau) - 1}, \Delta^0_{\lg(\tau) - 2}, \dots, \Delta^0_{0}$. The data structure $\mathcal{I}$ then holds the DDGs of all graphs $G_1, G_2, \ldots, G_{k'}$. 

Then, to process the next update to $G$, we then invoke $\textsc{AdvanceToNextVersion}()$. For queries on a particular version of $G^t$, we can after $t$ invocations of $\textsc{AdvanceToNextVersion}()$ simply query the data structure $\mathcal{D}$ that can answer DDG queries about the current graph.

\begin{algorithm2e}
$t \gets t+1$\;
Let $i_{max}$ be the largest integer such that $t \mod 2^{i_{max}} = 0$.\;
Ask $\mathcal{I}$ to roll back the weight update sequence $\Delta^{t-1}_0, \Delta^{t-1}_{1}, \dots, \Delta^{t-1}_{i_{max}}$.\label{lne:endforRollback}\;
Apply weight update sequence $\Delta^{t}_{i_{max}}, \Delta^{t}_{i_{max}-1}, \dots, \Delta^{t}_{0}$ to $\mathcal{I}$.\label{lne:reApply}
\caption{$\textsc{AdvanceToNextVersion}()$}\label{alg:updateVersion}
\end{algorithm2e}

The algorithm essentially exploits that for $j > i_{max}$, we have $H^{t-1}_{j} = H^{t}_{j}$ by the way we chose $i_{max}$ and by observation \ref{obs:timeInvariant}. This fact can be used to only roll back the updates $\Delta^{t-1}_0, \Delta^{t-1}_{1}, \dots, \Delta^{t-1}_{i_{max}}$ and after this step the data structure $\mathcal{D}$ holds effectively the graph $H^{t-1}_{i_{max}+1}$ which is exactly the graph $H^{t}_{i_{max}+1}$. At this point, it thus remains in the last update step in line \ref{lne:reApply}, to apply updates $\Delta^{t}_{i_{max}}, \Delta^{t}_{i_{max}-1}, \dots, \Delta^{t}_{0}$ such that at the end $\mathcal{D}$ holds the graph $H^t$ again as stipulated by observation \ref{obs:fewWeightIncreases}.

\paragraph{Analysis.} Correctness follows almost immediately from the discussion above, we merely point out that we have to unroll updates in the reverse order that we applied them to to ensure correctness, which is exactly what we do in our algorithm.

For the run-time analysis, we observe that the time to initialize $\mathcal{I}$ is $\tilde{O}(r + \tau \cdot \sqrt{r})$ by our previous discussion. Further, every $2^i$ updates, we roll-back $\Delta^{t-1}_i$ for some $t$ and add weight updates $\Delta^{t}_i$ to $\mathcal{I}$. We have by \Cref{obs:fewWeightIncreases} that each of these update sequences is of size at most $2^{i-1}$. Thus the data structure $\mathcal{I}$ can process these updates in time $\tilde{O}(2^i \cdot \sqrt{r})$. Thus, the amortized update time to handle such batches $\Delta^{t-1}_i$ for fixed $i$ at all times is $\tilde{O}(\sqrt{r})$. Note that our analysis was independent of the chosen $i$, and summing over all values of $i$, and adding the preprocessing gives amortized update time $\tilde{O}(\sqrt{r})$, as desired.

\paragraph{Improved Preprocessing.} It is not hard to see that we can partition the update sequence to $H$ into disjoint time segments of length $\lfloor \sqrt{r} \rfloor_2$. We then run the algorithm described above on each segment separately. It is easy to see that the above proof still gives the same guarantees while the preprocessing time of $\mathcal{I}$ for each segment is $\tilde{O}(r + \lfloor \sqrt{r} \rfloor_2 \cdot \sqrt{r}) = \tilde{O}(r)$.

\paragraph{De-Amortization.} To de-amortizate the data structure, we use that $\mathcal{I}$ is fully-persistent: every $t$ with $t \mod 2^i$ for some fixed $i$, we start building the data structure $\mathcal{I}$ for time $t+2^i$ from the version $t$ over the next $2^i$ updates where we spend $\tilde{O}(\sqrt{t})$ time at each time step. During this time, $\mathcal{I}$ is run further in its original version to support other updates. It is not hard to see that this increases the total update time by at most a logarithmic factor and ensures that the data structure achieves worst-case guarantees.

\section{Reduction from Dynamic DDG to Decremental DDG in the Online Setting}
\label{sec:reductionRealThing}

\begin{theorem}\label{thm:generalReduction}
Given an online data structure $\mathcal{D}$, that for any graph $H$, an $r$-capped recursive decomposition $\mathcal{R}_H$ of $H$ with roots $H_1, H_2, \dots, H_k$, and for any $i$, maintains the DDGs of all pieces in the tree rooted at $H_i$, with preprocessing time $T_{pre}(r)$, handles each edge weight increase affecting $H_i$ in amortized update time $T_u(r)$ and has query time $T_q(r)$. 

Then, there exists an \emph{online} data structure $\mathcal{F}$, that given a dynamic graph $G$, an $r$-division $G_1, G_2, \dots, G_{k'}$, maintains the DDG of each piece $G_{j}$ with pre-processing time $\tilde{O}(T_{pre}(r))$ and amortized update time $\tilde{O}(\sqrt{T_{pre}(r) \cdot T_q(r) \cdot \sqrt{r}} + T_u(r))$ and query time $\tilde{O}(T_q(r))$. 
\end{theorem}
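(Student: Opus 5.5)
Given an online data structure $\mathcal{D}$ that handles only weight \emph{increases}, I will build $\mathcal{F}$ in phases, combining $\mathcal{D}$ with the online incremental structure of Theorem~\ref{Thm:IncDDG}. Fix a phase length $\Delta$, to be chosen at the end. At the start of a phase, for each piece $G_j$ I rebuild $\mathcal{D}$ on the current version of $G_j$, which costs $T_{pre}(r)$. During the phase, updates are split by type. A weight increase is passed directly to $\mathcal{D}$, at cost $T_u(r)$. A weight decrease on edge $e=(u,v)$ is \emph{not} applied to $\mathcal{D}$. Instead, $\mathcal{D}$ keeps the old (larger) weight of $e$, and I record $e$ together with its current weight in a set $X$ of ``decreased'' edges. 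Since $|X|\le \Delta$ within a phase, the graph that $\mathcal{D}$ sees is always an upper bound on the true graph, and it differs from it only on the edges of $X$.

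To answer a DDG query $d_{G_j}(b,b')$, I will use the fact that a shortest path in the true graph either avoids every decreased edge, in which case its length equals the $\mathcal{D}$-distance, or its first and last decreased edges split it into a $\mathcal{D}$-path, a middle part, and another $\mathcal{D}$-path. I will therefore maintain an auxiliary graph $A_j$. Its vertices are $\partial G_j$ together with the endpoints of the decreased edges of $G_j$. Its edges are the decreased edges, with their current weights, plus ``shortcut'' edges whose weights are the $\mathcal{D}$-distances between these vertices. The middle part of the path lives in this small graph.

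Recomputing all shortcut distances is too expensive. Instead I will run a Dijkstra/FR-style search on $A_j$ from $b$ at query time, querying $\mathcal{D}$ lazily for shortcut weights. This search is also what a decrease triggers: to keep the DDG answers consistent with query time $\tilde O(T_q(r))$, each decrease is handled by computing, via such a search, the distances from all of $\partial G_j$ to the tail $u$ and from the head $v$ to all of $\partial G_j$. This is the same pattern as the updates in Theorem~\ref{Thm:IncDDG}. I then feed the resulting ``profit intervals'' into the last-interval structures of Theorem~\ref{thm:lastInterval}, exactly as in the incremental construction. A query then looks up the last profiting decreased edge, falls back to $\mathcal{D}$ if there is none, and costs $\tilde O(T_q(r))$.

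The expensive step is the per-decrease computation. It must handle $O(\sqrt r)$ boundary vertices plus up to $\Delta$ decreased-edge endpoints, using Monge/FR-Dijkstra on the boundary part, so it costs roughly $\tilde O(\sqrt r + \Delta)\cdot T_q(r)$ per decrease. Amortizing the rebuild gives $T_{pre}(r)/\Delta$ per update. Balancing $\Delta\, T_q(r)$ against $T_{pre}(r)/\Delta$, with the $\sqrt r$ boundary factor folded in, suggests choosing $\Delta=\sqrt{T_{pre}(r)/(T_q(r)\sqrt r)}$. This yields the claimed amortized bound $\tilde O(\sqrt{T_{pre}(r)T_q(r)\sqrt r}+T_u(r))$, with query time $\tilde O(T_q(r))$ and preprocessing time $\tilde O(T_{pre}(r))$.

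The main obstacle is correctness of interleaving increases with the stored profit information. A weight increase applied to $\mathcal{D}$ can invalidate previously stored decrease-based distances $d(b,u^t)$ and $d(v^t,b')$. If an increase hits an edge in $X$, I will simply overwrite its entry. The harder case is an increase elsewhere that lengthens paths used by stored profit entries. My first attempt is to make every query re-validate its answer: recompute the candidate via $\mathcal{D}$-shortcuts and the current weights in $X$, taking the minimum over the $\mathcal{D}$-distance and the paths through $A_j$. This avoids trusting stale values, at the cost of a search over $\tilde O(\Delta)$ vertices. If that cost is too large, I fall back to restarting the phase whenever increases accumulate, which fits the same amortization.
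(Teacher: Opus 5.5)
Your setup---phases of length $\Delta$, increases forwarded to $\mathcal{D}$, decreases batched in $X$, and an overlay in the style of Theorem~\ref{Thm:IncDDG} built on $\mathcal{D}$'s DDGs---matches the paper. However, the step you yourself flag as the main obstacle is left open, and neither proposed fix works.

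An increase forwarded to $\mathcal{D}$ changes the base DDGs. The stored values $d(b,u^t)$, $d(v^t,b')$ and the profit intervals of \emph{every} earlier decrease in the phase may then become wrong. The structures of Theorem~\ref{thm:lastInterval} cannot delete intervals. Overwriting the $X$-entry of an edge does not repair the entries of other decreases whose distances were computed through it.

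\begin{itemize}
\item Re-validating at query time by a search over $A_j$ costs $\Omega(\Delta)$ look-ups per DDG query, which breaks the required $\tilde O(T_q(r))$ query bound. Moreover, your shortcut weights are $G_j$-distances between vertices that are in general not boundary vertices of $G_j$. They are therefore not DDG entries that $\mathcal{D}$ returns in $T_q(r)$ time.
\item Restarting the phase (i.e.\ rebuilding $\mathcal{D}$) on increases costs $T_{pre}(r)$ per update on a sequence that alternates decreases and increases. Restarting only once increases ``accumulate'' leaves stale answers in between.
\end{itemize}

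The missing idea is to discard and rebuild the overlay after \emph{every} update, of either type. Take the current $\mathcal{D}$ as the base DDGs and re-apply all $|X|\le\Delta$ batched decreases with the update procedure of Theorem~\ref{Thm:IncDDG}. Each re-applied decrease needs $\tilde O(\sqrt r)$ DDG look-ups at $\tilde O(T_q(r))$ each. Rebuilding therefore costs $\tilde O(\Delta\sqrt r\,T_q(r))$ per update, after which a query costs $\tilde O(T_q(r))$ and is correct by construction.

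This multiplicative $\Delta\sqrt r\,T_q(r)$ term, balanced against the amortized rebuild cost $T_{pre}(r)/\Delta$, is exactly what yields $\Delta=\sqrt{T_{pre}(r)/(T_q(r)\sqrt r)}$ and the claimed bound. Your stated per-decrease cost $\tilde O(\sqrt r+\Delta)\,T_q(r)$ does not justify this choice of $\Delta$; it would balance elsewhere. The $\sqrt r$ you ``fold in'' is precisely the factor that comes from re-applying the whole batch at every step, which your algorithm never does.
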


The proof of the above theorem is rather straight-forward: for a fixed piece $P = G_j$, we initiate a decremental data structure $\mathcal{D}$ on $P$ and then rebuild every $\Delta$ updates to $P$, the entire data structure $\mathcal{D}$ on $P$. 

For convenience, at any time step, let us define $\mathcal{B}$ to be the batch of insertions that were applied to $P$ since the last time that $\mathcal{D}$ was rebuild. Then, at each time step, if an edge weight increase was issued by the adversary, the update is forwarded to $\mathcal{D}$. Independently of the type of update, we can then use the current DDGs of $\mathcal{D}$ and the batch $\mathcal{B}$ as inputs to the algorithm in Theorem \ref{Thm:IncDDG} to obtain a DDGs of the current graph $H$. 

It is easy to convince oneself of correctness. The running time consists of $T_{pre}(r)/ \Delta$ amortized time to rebuild the data structure $\mathcal{D}$, each update to $\mathcal{D}$ is processed in time $T_u(r)$ and finally, Theorem \ref{Thm:IncDDG} with the guarantee that $\mathcal{B}$ never exceeds size $\Delta$ (since we rebuild the data structure and reset $\mathcal{B}$ after $\Delta$ updates) ensures that the final DDG data structure that can be queried is created in time $\tilde{O}(\Delta \cdot \sqrt{r} T_q(r))$ and query time $\tilde{O}(T_q(r))$. By choice of $\Delta = \sqrt{\frac{T_{pre}(r)}{T_q(r) \cdot \sqrt{r}}}$, we obtain the stated amortized update time.

\section{Dynamic max flow}\label{sec:DynMaxFlow}
In this section, we prove Corollary~\ref{cor:MaxFlow}. We assume that the reader is familiar with the details of~\cite{ItalianoNSW11}. It is useful to separate the running time of their data structure into two parts: $\tilde O(n/\sqrt r + r)$ worst-case and $\tilde O(n/\sqrt r)$ amortized time. We focus on each part separately.

The $\tilde O(n/\sqrt r)$ amortized bound comes from the data structure having to recompute the $r$-division every $\sqrt r$ updates. This recomputation is necessary since~\cite{ItalianoNSW11} supports edge insertions and deletions and an inserted edge might have endpoints in distinct pieces, violating the properties of an $r$-division. The way this is handled is to regard each such edge as a trivial piece consisting of only that edge. The issue then is that the number of pieces and the number of boundary vertices of a piece may grow over time. However, by recomputing every $\sqrt r$ updates, the properties of the $r$-division are maintained. Since we only allow edge weight changes, this issue cannot arise so we can immediately eliminate the amortized time bound.

The $\tilde O(n/\sqrt r + r)$ worst-case bound is the time it takes to
\begin{enumerate}
    \item recompute the DDG of the piece containing the edge whose weight was changed,
    \item recompute candidate min $st$-separating cycles fully contained in that piece, and
    \item running the coarse version of Reif on the set of boundary vertices of the $r$-division.
\end{enumerate}
The first part is immediately improved from $\tilde O(r)$ to $\tilde O(\sqrt r)$ using our new data structure. The second part can be solved by having a recursive max flow/min $st$-cut data structure for each piece. The final part remains unchanged and runs in time $\tilde O(n/\sqrt r)$ using FR-Dijkstra. We thus get the following recurrence for the improved worst-case time: $T(n) = \tilde O(n/\sqrt r + \sqrt r) + T(r)$. Picking $r$ to be a constant factor smaller than $n$, we get the desired $\tilde O(\sqrt n)$ worst-case bound from a geometric sums argument.
\end{document}

%% file: defs.tex
\usepackage[usenames,dvipsnames,svgnames,table]{xcolor}
\usepackage{enumitem}
\usepackage{graphicx}
\usepackage{fancybox}
\usepackage{comment}
\usepackage[disable]{todonotes}
\usepackage{xcolor}
\usepackage{nameref}
\usepackage{epsfig}
\usepackage{lscape}
\usepackage{lineno}

\definecolor{ForestGreen}{rgb}{0.1333,0.5451,0.1333}
\definecolor{DarkRed}{rgb}{0.8,0,0}
\definecolor{Red}{rgb}{1,0,0}
\usepackage[linktocpage=true,
pagebackref=true,colorlinks,
linkcolor=ForestGreen,citecolor=ForestGreen,
bookmarks,bookmarksopen,bookmarksnumbered]
{hyperref}

\usepackage{amsmath}
\usepackage{amssymb}
\usepackage{amsthm}

\usepackage[ruled, linesnumbered, algo2e]{algorithm2e}
\usepackage[noend]{algorithmic}
\usepackage{subfig}

\usepackage{thm-restate}

\usepackage{float}
\usepackage{cleveref}

\newtheorem{theorem}{Theorem}[section]

\newtheorem{corollary}[theorem]{Corollary}
\newtheorem{lemma}[theorem]{Lemma}
\newtheorem{observation}[theorem]{Observation}

\newtheorem{definition}[theorem]{Definition}
\newtheorem{remark}[theorem]{Remark}

\newtheorem*{theorem*}{Theorem}
\newtheorem*{corollary*}{Corollary}
\newtheorem*{conjecture*}{Conjecture}
\newtheorem*{lemma*}{Lemma}
\newtheorem*{thm*}{Theorem}
\newtheorem*{prop*}{Proposition}
\newtheorem*{obs*}{Observation}
\newtheorem*{definition*}{Definition}

\newtheorem*{remark*}{Remark}
\newtheorem*{rec*}{Recommendation}

\newenvironment{fminipage}%
  {\begin{Sbox}\begin{minipage}}%
  {\end{minipage}\end{Sbox}\fbox{\TheSbox}}



%% file: r.pdf_t
\begin{picture}(0,0)%
\includegraphics{r.pdf}%
\end{picture}%
\setlength{\unitlength}{4144sp}%
\begingroup\makeatletter\ifx\SetFigFont\undefined%
\gdef\SetFigFont#1#2#3#4#5{%
  \reset@font\fontsize{#1}{#2pt}%
  \fontfamily{#3}\fontseries{#4}\fontshape{#5}%
  \selectfont}%
\fi\endgroup%
\begin{picture}(6044,2452)(5044,-16640)
\put(6751,-14596){\makebox(0,0)[lb]{\smash{{\SetFigFont{9}{10.8}{\familydefault}{\mddefault}{\updefault}{\color[rgb]{0,0,0}$\pi_{P, T_{v,\mid h},f}$}%
}}}}
\put(9381,-16196){\makebox(0,0)[lb]{\smash{{\SetFigFont{9}{10.8}{\familydefault}{\mddefault}{\updefault}{\color[rgb]{0,0,0}$v^ t$}%
}}}}
\put(9672,-15026){\makebox(0,0)[lb]{\smash{{\SetFigFont{9}{10.8}{\familydefault}{\mddefault}{\updefault}{\color[rgb]{0,0,0}$r_{v,h,h'}$}%
}}}}
\put(10756,-14461){\makebox(0,0)[lb]{\smash{{\SetFigFont{9}{10.8}{\familydefault}{\mddefault}{\updefault}{\color[rgb]{0,0,0}$h'$}%
}}}}
\put(6231,-16196){\makebox(0,0)[lb]{\smash{{\SetFigFont{9}{10.8}{\familydefault}{\mddefault}{\updefault}{\color[rgb]{0,0,0}$v$}%
}}}}
\put(6522,-15026){\makebox(0,0)[lb]{\smash{{\SetFigFont{9}{10.8}{\familydefault}{\mddefault}{\updefault}{\color[rgb]{0,0,0}$r_{v,h,h'}$}%
}}}}
\put(7606,-14461){\makebox(0,0)[lb]{\smash{{\SetFigFont{9}{10.8}{\familydefault}{\mddefault}{\updefault}{\color[rgb]{0,0,0}$h'$}%
}}}}
\put(9406,-16576){\makebox(0,0)[lb]{\smash{{\SetFigFont{9}{10.8}{\familydefault}{\mddefault}{\updefault}{\color[rgb]{0,0,0}$(b)$}%
}}}}
\put(6256,-16576){\makebox(0,0)[lb]{\smash{{\SetFigFont{9}{10.8}{\familydefault}{\mddefault}{\updefault}{\color[rgb]{0,0,0}$(a)$}%
}}}}
\put(8578,-16232){\makebox(0,0)[lb]{\smash{{\SetFigFont{9}{10.8}{\familydefault}{\mddefault}{\updefault}{\color[rgb]{0,0,0}$b$}%
}}}}
\put(9249,-15571){\makebox(0,0)[lb]{\smash{{\SetFigFont{9}{10.8}{\familydefault}{\mddefault}{\updefault}{\color[rgb]{0,0,0}$v_k(h)$}%
}}}}
\put(8944,-15486){\makebox(0,0)[lb]{\smash{{\SetFigFont{9}{10.8}{\familydefault}{\mddefault}{\updefault}{\color[rgb]{0,0,0}$v_j(h)$}%
}}}}
\put(9109,-16166){\makebox(0,0)[lb]{\smash{{\SetFigFont{9}{10.8}{\familydefault}{\mddefault}{\updefault}{\color[rgb]{0,0,0}$u^t$}%
}}}}
\put(9379,-15034){\makebox(0,0)[lb]{\smash{{\SetFigFont{9}{10.8}{\familydefault}{\mddefault}{\updefault}{\color[rgb]{0,0,0}$h$}%
}}}}
\put(6219,-15036){\makebox(0,0)[lb]{\smash{{\SetFigFont{9}{10.8}{\familydefault}{\mddefault}{\updefault}{\color[rgb]{0,0,0}$h$}%
}}}}
\put(9615,-15564){\makebox(0,0)[lb]{\smash{{\SetFigFont{9}{10.8}{\familydefault}{\mddefault}{\updefault}{\color[rgb]{0,0,0}$v_i(h)$}%
}}}}
\put(6076,-14534){\makebox(0,0)[lb]{\smash{{\SetFigFont{9}{10.8}{\familydefault}{\mddefault}{\updefault}{\color[rgb]{0,0,0}$f$}%
}}}}
\end{picture}%